
\documentclass[twoside,leqno,twocolumn]{article}

\usepackage[letterpaper]{geometry}

\usepackage{ltexpprt}
\usepackage{amssymb}
\usepackage{amsmath}
\usepackage{graphicx}
\usepackage{booktabs}
\usepackage{multirow}
\DeclareMathOperator*{\argmax}{arg\,max}

\usepackage{color}

\newtheorem{remark}{Remark}[section]
\newcommand{\kibitz}[2]{\ifnum\Comments=0\textcolor{#1}{#2}\fi}

\begin{document}

\title{On the Request-Trip-Vehicle Assignment Problem}
\author{J. Carlos Mart\'{i}nez Mori\thanks{Center for Applied Mathematics, Cornell University. Email: jm2638@cornell.edu. Work partially supported by the Dwight David Eisenhower Transportation Fellowship Program under Award No. 693JJ32145020.}
\and Samitha Samaranayake\thanks{School of Civil and Environmental Engineering, Cornell University. Email: samitha@cornell.edu. Work partially supported by the National Science Foundation under Grant No. CNS-1952011 and DMS-1839346.} }

\date{}

\maketitle


\fancyfoot[R]{\scriptsize{Copyright \textcopyright\ 2021 by SIAM\\
Unauthorized reproduction of this article is prohibited}}





\begin{abstract}
\small\baselineskip=9pt 
The \textit{request-trip-vehicle} assignment problem is at the heart of a popular decomposition strategy for online vehicle routing. In this framework, assignments are done in batches in order to exploit any shareability among vehicles and incoming travel requests. We study a natural ILP formulation and its LP relaxation. Our main result is an LP-based randomized rounding algorithm that, whenever the instance is feasible, leverages mild assumptions to return an assignment whose: \textit{i}) expected cost is at most that of an optimal solution, and \textit{ii}) expected fraction of unassigned requests is at most $1/e$. If trip-vehicle assignment costs are $\alpha$-approximate, we pay an additional factor of $\alpha$ in the expected cost. We can relax the feasibility requirement by considering the penalty version of the problem, in which a penalty is paid for each unassigned request. We find that, whenever a request is repeatedly unassigned after a number of rounds, with high probability it is so in accordance with the sequence of LP solutions and not because of a rounding error. We additionally introduce a deterministic rounding heuristic inspired by our randomized technique. Our computational experiments show that our rounding algorithms achieve a performance similar to that of the ILP at a reduced computation time, far improving on our theoretical guarantee. The reason for this is that, although the assignment problem is hard in theory, the natural LP relaxation tends to be very tight in practice. 
\end{abstract}

\section{Introduction}
In the \textit{request-trip-vehicle} (RTV) assignment problem, we are given a set $R$ of travel requests, a set $T$ of candidate trips (i.e., a collection of subsets of $R$), and a set $V$ of vehicles. Assigning a vehicle to a trip has an associated cost, typically representing distance traveled or incurred delays. The problem is to find a minimum cost set of trip-vehicle assignments such that: \textit{i}) each request appears in exactly one trip-vehicle assignment, and \textit{ii}) each vehicle is assigned to at most one trip.

The problem is at the heart of a decomposition strategy for online vehicle routing problems popularized by Alonso-Mora et al.~\cite{alonso2017demand} within the context of high-capacity ridesharing. Compared to traditional literature on vehicle routing~\cite{toth2014vehicle}, this \textit{anytime} optimal framework decouples the routing and matching aspects of the problem, making it well-suited for parallel, online computation. The term \textit{online} refers to the real-time nature of the system. Assignments are done in batches (e.g., every 5 to 30 seconds), rather than sequentially, to exploit any shareability~\cite{santi2014quantifying} among vehicles and incoming travel requests. This style of solution approach is in fact used in practice by some well-known on-demand mobility service providers (for example, see~\cite{reengineering2017data}).

The framework exploits two key structural properties: \textit{i}) there are tight quality of service constraints (e.g., maximum wait time, maximum travel time)~\cite{santi2014quantifying,guo2018solving,xu2020geoprune}, and \textit{ii}) the feasible space is downward closed. In particular, a necessary condition for a potential trip-vehicle assignment to be feasible is that all of its sub-trip-vehicle assignments are feasible. This means $T$ itself is downward closed. Together, these properties help prune a priori infeasible trips and vehicle assignments, thereby thwarting the combinatorial explosion.


\subsection{Related Work.} 

Previous work around the RTV assignment problem has primarily focused on experimental performance. Ota et al.~\cite{ota2016stars} give a greedy assignment algorithm based on a request indexing scheme. Simonetto et al.~\cite{simonetto2019real} solve the problem via linear assignments. Rather than tackling the problem all at once, they decompose it into a sequence of semi-matching linear programs in which each vehicle can be matched to up to one request. In their experiments, they achieve a system performance similar to that of~\cite{alonso2017demand} in about a fourth of the time. Lowalekar, Varakantham, and Jaillet~\cite{lowalekar2019zac} generate assignments at the \textit{zone path} level. They group trips that have compatible pickup and drop off locations. Riley, Legrain, and Van Hentenryck~\cite{riley2019column} propose a column generation algorithm under soft constraints, where quality of service is enforced through a Lagrangian approach. They solve their pricing problem through an anytime optimal algorithm that, similar to the trip generation step in~\cite{alonso2017demand}, explores the feasible space in increasing order of trip size. Their algorithm takes exponential time in the worst case, but their experiments suggest soft constraints can reduce wait times and route deviations. 

Bei and Zhang~\cite{bei2018algorithms} show the problem is NP-hard even when no more than two requests can share a vehicle. They also give a $5/2$-approximation algorithm for the total distance minimization version of this special case (under the additional assumption that there are \textit{exactly} twice as many requests as there are vehicles). Li, Li, and Lee~\cite{li2020trip} match the $5/2$ approximation guarantee while relaxing the latter assumption. Namely, they only require there being \textit{at most} twice as many requests as there are vehicles. Luo and Spieksma~\cite{luo2020approximation} obtain a 2-approximation algorithm under the same assumptions as in~\cite{bei2018algorithms}. They also obtain a $5/3$-approximation when the objective is to minimize the total latency. Lowalekar, Varakantham, and Jaillet~\cite{lowalekar2020competitive} study the competitive online version of the problem in the special case where vehicles return to their depot after serving a set of requests. If requests arrive in batches under a known adversarial arrival distribution, they obtain an algorithm whose expected competitive ratio is 0.3176 whenever vehicles have seating capacity 2. If vehicles have seating capacity $k > 0$, their expected competitive ratio is $\gamma > 0$, where $\gamma$ is the solution to $\gamma=(1-\gamma)^{k+1}$.

\subsection{Contributions.} 
\label{sec: contributions}
\subsubsection*{1.} We consider a natural integer linear programming (ILP) formulation that, in the worst case, has exponentially many variables. It is therefore not immediately clear whether one can always solve or even approximate its linear programming (LP) relaxation in polynomial time. To this end, we study the dual separation problem (i.e., column generation). The hope would be to approximately separate over the dual to approximately solve the primal, in the style of~\cite{carr2000randomized,jain2003packing,fleischer2011tight}. 

Although our ILP formulation is more general, we focus on the typical case in which trip-vehicle assignment costs correspond to the distance traveled by a vehicle when serving the requests in a trip. In this case, we identify the core of the dual separation problem as an instance of the \textit{net-worth maximization} version of the prize-collecting traveling salesman problem (TSP). We note there is a closely related inapproximability result by Feigenbaum, Papadimitriou, and Shenker~\cite{feigenbaum2001sharing} for the net-worth maximization version of the prize-collecting Steiner tree problem. Intuitively, we expect\footnote{The results for Steiner tree do not immediately translate into results for TSP since, in this version of the problem, the objective function is the difference of two terms.} their result to carry over to the tour version, but their gadget is not easily adaptable for this purpose. To the best of our knowledge, this is an open problem. We also make a note on incompatible statements made in passing in the literature, particularly around the applicability of existing approximation algorithms for a different version of the prize-collecting TSP.

This unfavorable prospect motivates us to assume the trip list $T$ is pre-computed and polynomial-sized. While this seems a rather strong assumption at first glance, it becomes much more reasonable if we a priori prune $T$ based on fixed vehicle seating capacities $k > 0$ and tight quality of service constraints. We emphasize this is a key factor that differentiates ridesharing applications of vehicle routing (e.g., in contrast to applications in logistics), and that pre-computing $T$ is in fact what is done in practical approaches for this problem~\cite{santi2014quantifying,alonso2017demand,reengineering2017data}.

\subsubsection*{2.}

Our second and main contribution is a simple LP-based randomized rounding algorithm for the RTV assignment problem with provable performance guarantees. We summarize our result as follows.

\begin{theorem}
\label{thm: main}
Suppose we have a feasible instance of the RTV assignment problem with $T$ polynomial-sized. If trip-vehicle costs are oracle-given and monotonic increasing w.r.t. request inclusion\footnote{The cost of a trip-vehicle assignment cannot decrease by adding an extra request.}, there is a randomized algorithm such that:
    \begin{itemize}
        \item The expected cost of the final solution is at most that of an optimal solution.
        \item The expected fraction of unassigned requests is at most $1/e$ (i.e., less than $36.8\%$ of all requests).
    \end{itemize}
    If trip-vehicle costs are $\alpha$-approximate, we pay an additional factor of $\alpha$ in the expected cost.
\end{theorem}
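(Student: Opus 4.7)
The plan is to apply LP-based randomized rounding to the natural ILP. After solving the LP relaxation to obtain a fractional optimum $x^*$, for each vehicle $v$ I would independently perform a single categorical draw: assign vehicle $v$ to trip $t$ with probability $x^*_{t,v}$, and to no trip with the remaining probability $1-\sum_t x^*_{t,v}$, which is nonnegative because of the LP vehicle constraint $\sum_t x_{t,v}\le 1$. This per-vehicle dependent sampling enforces the ``each vehicle is used at most once'' constraint as a hard constraint. By linearity of expectation, the expected total cost equals $\sum_{t,v} x^*_{t,v}\, c_{t,v}$, which is the LP optimum and hence at most $\mathrm{OPT}$. If the costs handed to the LP are $\alpha$-approximate upper bounds on the true costs, this argument inflates by at most a factor of $\alpha$.

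For the unassigned-request bound, I would fix a request $r$ and set $p_{r,v}=\sum_{t\ni r} x^*_{t,v}$. The request-covering constraint gives $\sum_v p_{r,v}=1$, and independence across vehicles yields
\[
\Pr[r\text{ unassigned}] \;=\; \prod_{v\in V}(1-p_{r,v}).
\]
Maximizing this product subject to $\sum_v p_{r,v}=1$ and $p_{r,v}\in[0,1]$ is a standard AM--GM exercise: the maximum is $(1-1/|V|)^{|V|} < 1/e$. Linearity of expectation then gives the $1/e$ bound on the expected fraction of unassigned requests.

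The remaining nuisance is that a request may be covered by the trips of several selected vehicles, which is not outright infeasibility but does conflict with the ``each request assigned exactly once'' ILP constraint. Here I would use the downward-closed structure of $T$ together with the monotonicity hypothesis stated in the theorem: for every overcovered request, choose one serving vehicle arbitrarily and delete the request from every other selected trip. By downward closure the resulting sub-trips still lie in $T$, and by monotonicity their costs do not increase, so the realized total cost is still bounded above by $\sum_v c_{t^*_v,v}$ and the expected-cost guarantee survives cleanup. No request that was covered becomes uncovered, so the $1/e$ guarantee is also preserved. I expect the main obstacle to be getting all three desiderata to hold at once---hard vehicle feasibility, independence of the per-vehicle draws so that the product form for $\Pr[r\text{ unassigned}]$ applies, and a cost-preserving repair of overcovered requests---which is precisely what per-vehicle categorical sampling plus monotonicity plus downward closure of $T$ provide.
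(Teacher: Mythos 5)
Your proposal is correct and follows essentially the same route as the paper: per-vehicle categorical sampling from the LP solution, independence across vehicles giving the product form $\prod_v(1-p_{r,v})\le 1/e$, and a multiplicity-correction step justified by downward closure of $T$ and cost monotonicity. The only cosmetic difference is that you bound the product via AM--GM and $(1-1/|V|)^{|V|}$ rather than $1-x\le e^{-x}$, an alternative the paper itself notes in a footnote.
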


For example, Theorem~\ref{thm: main} is applicable when trip-vehicle assignment costs correspond to distance traveled. Our rounding technique is similar in style to that of Raghavan and Thompson~\cite{raghavan1987randomized} for the minimum capacity multi-commodity flow problem. In addition, we show that \textit{i}) the bound of $1/e$ on the rejection rate is tight for our algorithm, and \textit{ii}) the integrality gap of the natural LP relaxation is at least $2$.

In practice, a feasible solution to the RTV assignment problem might not always exist (e.g., when vehicle demand exceeds supply). Therefore, in practice, we may instead need to consider the penalty version of the problem, in which a penalty is paid for each ignored request. We show that we can still use our algorithm for this version of the problem, and make the following remarks:

\begin{itemize}
    \item  
    In practice, unassigned requests are carried over to the next round of batch assignments (e.g., 5 to 30 seconds later). We argue that for any request $r \in R$, with high probability over a number of rounds, our algorithm either assigns it to a vehicle or purposely ignores it, in accordance to the sequence of LP solutions. That is to say, if a request is ignored by our algorithm after various rounds, it is increasingly likely to be so because of the LP solutions and not because of a rounding error.
    \item When the penalty terms eclipse the assignment costs (e.g., routing costs), the penalty version of the problem is essentially an instance of the cardinality version of a maximum coverage problem with group budget constraints, for which an $1/e$-approximation algorithm is known~\cite{chekuri2004maximum}, assuming $T$ is given explicitly. Our algorithm matches this guarantee in the sense that the expected fraction of requests that could have been covered but were not is at most $1/e$, while it provides the additional benefit that the quality of the assignment is also optimized as part of the LP.
\end{itemize}

To the best of our knowledge, this is the first algorithmic result with provable performance guarantees for the RTV assignment problem in its full generality. In particular, we allow high-capacity ridesharing (more than two requests can share a vehicle) under the general class of monotonic cost coefficients. Our techniques generalize to a class of set-partitioning problems, which we formalize in the Appendix.

\subsubsection*{3.} 

We complement our analysis with computational experiments. We generate 10 distinct sets of $1,440$ simulated instances of the penalty version of the RTV assignment problem, corresponding to different combinations of number of vehicles and vehicle capacities. In our experiments, we evaluate \textit{i}) solving the ILP to optimality using a commercial solver, \textit{ii}) our randomized rounding algorithm, and \textit{iii}) a deterministic rounding heuristic inspired by our randomized technique. 

We observe that the percentage of rejected requests incurred by either rounding method is consistently below one percentage point higher than it is by solving the ILP (our heuristic performs slightly better than our randomized algorithm). Moreover, our implementation of either rounding method is faster than solving the ILP, with the percent improvement for mean and median values generally ranging between $2-7\%$. The percent improvement for mean values is consistently higher than the percent improvement for median values, showing computation time improvements are skewed to the right. In other words, the percent improvement is higher for the worst case (i.e., slowest) instances, which are arguably the most critical for real-time applications.

The performance of our randomized rounding algorithm in practice is far better than the theoretical guarantee in Theorem~\ref{thm: main}, but admittedly the ILP can be solved much faster than anticipated. Therefore, we investigate this further and find that the overwhelming majority ($\sim 96\%$) of supported LP variables are in fact integral. We moreover observe that $\sim 62\%$ of non-integral LP variables are half-integral. Of course, commercial ILP solvers leverage the quality of these fractional solutions, and our rounding algorithms make little to no mistakes on these assignments.

Our experiments confirm what we know anecdotally from other researchers and practitioners: although the RTV assignment problem is hard in theory~\cite{bei2018algorithms}, the natural LP relaxation tends to be very tight. 



\subsection{Future Research.}
\label{sec: future research}

We introduce rounding algorithms that are robust to worst case fractional solutions. However, we also present empirical evidence suggesting the natural LP relaxation can be very tight in practice. Therefore, we believe it would be valuable to rigorously understand whether or not this is always the case in practice (e.g., using data sets from various cities, beyond the habitual NYC Taxi and Limousine Commission (TLC) data~\cite{taxidata} we used in our experiments). If there are instances in which the LP is not as tight, they will likely be harder commercial ILP solvers, and so they might more strongly showcase the benefit of LP rounding. Lastly, we note that in practice the bottleneck remains to be the generation of the candidate trip list $T$. The difficulty of this step corresponds to our discussion of the dual separation problem and its approximability.

\subsection{Organization.}
\label{sec: organization}

The remainder of this paper is organized as follows. In Section~\ref{sec: preliminaries} we introduce our notation and our ILP formulation. In Section~\ref{sec: on the lp relaxation} we study its LP relaxation. In Section~\ref{sec: randomized rounding} we present and analyze our randomized rounding algorithm. In Section~\ref{sec: penalty version and multiple rounds} we introduce the penalty version of the problem. Lastly, in Section~\ref{sec: computational experiments} we introduce our deterministic rounding heuristic and share our computational experience.

\section{Preliminaries}
\label{sec: preliminaries}

\subsection{Notation and Assumptions.}
\label{sec: notation and assumptions} 

Let $T(r)$ be the set of trips that contain request $r \in R$. Let $V(t)$ be the set of vehicles that can serve trip $t \in T$ and $T(v)$ be the set of trips than can be served by vehicle $v \in V$. For ease of presentation, with the exception of Section~\ref{sec: computational experiments}, we assume any vehicle can serve any trip. Then, $V(t) = V$ for each $t \in T$ and $T(v) = T$ for each $v \in V$. This simplifies the LP constraints, which further simplifies our study of the dual LP. Our algorithm does not rely on this assumption. For technical reasons, we further assume $\emptyset \in T$ and $V(\emptyset) = V$. Any cost involving the empty trip $\emptyset \in T$ represents the cost of serving the passengers currently inside the vehicle, which is zero if the vehicle is empty. We use $\text{OPT}$ to denote the cost of an optimal solution to the RTV assignment problem and $\text{LP}(\cdot)$ to denote the objective value of a linear program. An $\alpha$-approximation algorithm for a minimization problem returns a solution of cost no more than $\alpha \geq 1$ times that of an optimal solution. An $\alpha$-approximation algorithm for a maximization problem returns a solution of value at least $0 \leq \alpha \leq 1$ times that of an optimal solution.

\subsection{Problem Formulation.}
\label{sec: problem formulation}


Let $c_{tv} \geq 0$ be the cost of assigning trip $t \in T$ to vehicle $v \in V$. Again, we assume trip-vehicle costs are monotonic increasing with respect to request inclusion. 

Typically, $c_{tv}$ corresponds to the distance traveled by a single vehicle $v$ when serving the requests in $t$ (and the passengers currently inside $v$, if any). In the simplest idealized scenario, the single vehicle routing problem is an instance of the metric TSP~\cite{lawler1985traveling,applegate2006traveling}. Then, the monotonicity assumption is met~\cite{shmoys1990analyzing}. In reality, however, the single vehicle routing problem actually corresponds to some generalization of the TSP that includes service-specific constraints. For example, it may involve paths rather than tours~\cite{hoogeveen1991analysis}, pickups and deliveries~\cite{ruland1997pickup,dumas1991pickup}, time windows~\cite{dumas1991pickup,bansal2004approximation}, capacity constraints~\cite{chalasani1999approximating}, neighborhoods~\cite{garg2000polylogarithmic}, and so on. 

Now, consider the following ILP formulation which, in the worst case, has exponentially many variables due to the size of $T$. It moreover has polynomially many constraints (except for the binary constraints). Here, $x_{tv}$ is set to $1$ if trip $t \in T$ is assigned to vehicle $v \in V$.

\begin{equation}
\label{p: rtv}
\begin{array}{lr@{}ll}
\text{min}  & \displaystyle\sum_{t \in T} \displaystyle\sum_{v \in V} c_{tv}x_{tv} & &\\
\text{s.t.}  & & & \\
    & \displaystyle\sum_{(t,v) \in T(r) \times V} x_{tv} &\geq 1,      & \forall r \in R \\
    & \displaystyle\sum_{t \in T}  x_{tv} &\leq 1,      & \forall v \in V \\
    &                                            x_{tv} &\in \{0,1\},         & \forall t \in T, v \in V
\end{array}
\end{equation} 

Our objective is to minimize the total cost of trip-vehicle assignments. The first set of constraints ensure each request is served. We can introduce these as covering constraints since all sets are downward closed and costs are monotonic increasing, but we could have equivalently introduced them as set-partitioning constraints and required equality. The second set of constraints ensure each vehicle is assigned to at most one trip. Since we assume $\emptyset \in T$, we again could have required equality for these constraints. See the Appendix for our description of this problem as a weighted set-partitioning problem.

\section{On the LP Relaxation} 
\label{sec: on the lp relaxation}

We would like to use the LP relaxation of (\ref{p: rtv}) as part of our algorithm. It is therefore natural to ask whether we can solve it in polynomial time under a succinct representation of $T$. We consider two approaches.

\subsection{An Assumption on $T$.}
\label{sec: an assumption on T}

Suppose $T$ is given explicitly as a polynomial-sized set. That is, $|T| = \text{poly}(|R|, |V|)$ and so we can directly write (and solve) the LP. We argue this assumption is not too bad. Since vehicles have a fixed seating capacity $k>0$, the size of $T$ is typically $O(n^k)$. This is not always the case since a vehicle that drops off passengers along the way may serve more than $k$ requests. Nevertheless, tight quality of service constraints typically prevent large trips from being feasible (and hence can be excluded from $T$ a priori, as in~\cite{alonso2017demand}). 

Now, this assumption does not change the fact that we need to compute the coefficients $c$, which may be NP-hard. The following claim is easy to show.
\begin{lemma}
\label{lemma: approximate c}
Suppose we have a feasible instance of the RTV assignment problem with $T$ polynomial-sized. Given an $\alpha$-approximation algorithm for $c$, we can $\alpha$-approximate $\text{LP}(\text{\ref{p: rtv}})$.
\end{lemma}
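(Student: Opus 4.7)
The plan is to compute the approximate coefficients explicitly, plug them into the LP, solve it exactly, and show that the resulting optimal fractional solution certifies an $\alpha$-approximation of $\text{LP}(\ref{p: rtv})$ under the true costs. Because $|T|$ is polynomial by the standing assumption, I can query the $\alpha$-approximation oracle once per trip-vehicle pair to build approximate coefficients $\hat{c}_{tv}$ in polynomial total time, and then invoke any polynomial-time LP algorithm on the resulting explicitly-written LP.

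The key step in the analysis is to exploit the fact that the constraint system of~(\ref{p: rtv}) does not depend on the cost coefficients, so any LP-optimal $\hat{x}$ under $\hat{c}$ is also LP-feasible under $c$. With $c_{tv} \leq \hat{c}_{tv} \leq \alpha c_{tv}$ (as returned by a minimization $\alpha$-approximation, which always outputs a certificate whose value is an upper bound on the true optimum), and letting $x^*$ denote an LP optimum under $c$, I would chain
\begin{equation*}
\sum_{t,v} c_{tv}\hat{x}_{tv} \leq \sum_{t,v} \hat{c}_{tv}\hat{x}_{tv} \leq \sum_{t,v} \hat{c}_{tv} x^*_{tv} \leq \alpha \sum_{t,v} c_{tv} x^*_{tv} = \alpha \cdot \text{LP}(\ref{p: rtv}),
\end{equation*}
where the middle inequality uses optimality of $\hat{x}$ for the approximate LP and the outer two use the coordinate-wise bounds relating $c$ and $\hat{c}$. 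This simultaneously exhibits a feasible LP solution whose true objective is at most $\alpha \cdot \text{LP}(\ref{p: rtv})$ and a computable value $\sum \hat{c}_{tv}\hat{x}_{tv}$ within a factor $\alpha$ of the LP optimum.

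I do not expect a real obstacle; the only subtlety worth stating explicitly is the direction of the approximation bound, i.e., that the $\alpha$-approximation for the (typically NP-hard) single-vehicle routing subproblem produces an upper bound $\hat{c}_{tv} \geq c_{tv}$, which is what makes the first inequality in the display valid. Notably, neither monotonicity of $c$ with respect to request inclusion nor any other structural property of the trip-vehicle assignments is used in the argument, so the statement is essentially a general fact about LPs whose objective coefficients are only approximately computable, specialized here to (\ref{p: rtv}).
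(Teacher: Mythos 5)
Your proposal is correct and follows essentially the same argument as the paper: solve the LP under the approximate coefficients $\hat{c}$ and chain $\hat{c}(\hat{x}) \leq \hat{c}(x^*) \leq \alpha\, c(x^*) = \alpha \cdot \text{LP}(\ref{p: rtv})$ using optimality of $\hat{x}$ under $\hat{c}$ and the coordinate-wise bound. Your additional first inequality $c(\hat{x}) \leq \hat{c}(\hat{x})$ (valid since an $\alpha$-approximation for a minimization problem returns $\hat{c}_{tv} \geq c_{tv}$) is a harmless strengthening that the paper's shorter proof omits.
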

\begin{proof}
Let $x$ be an optimal solution for oracle-given coefficients $c$. Let $x'$ be an optimal solution for $\alpha$-approximate coefficients $c'$. Clearly $c'(x') \leq c'(x)$ and $c'(x) \leq \alpha \cdot c(x)$, which shows $c'(x') \leq \alpha \cdot c(x)$.
\end{proof}

For example, if the underlying routing problem were the metric TSP, we could $3/2$-approximate $c$ with Christofides' algorithm~\cite{christofides1976worst} (we can use this algorithm since, for most practical purposes in road networks, we may assume we operate on a symmetric metric space~\cite{mori2019bounded}). If the underlying graph were in addition planar, we could use Klein's polynomial time approximation scheme (PTAS)~\cite{klein2005linear}. If the underlying routing problem were instead some generalization of the TSP, the argument would follow identically except we would use whatever approximation guarantee is available (which cannot improve on what is available for the TSP).

\subsection{Dual Separation.}
\label{sec: dual separation}

Since the primal LP has exponentially many variables in the worst case, it is natural to consider its dual, which has polynomially many variables but exponentially many constraints. While we remain unable to write down all dual constraints, we could in principle solve the dual LP using the ellipsoid method together with a dual separation algorithm. 

Given a solution to a partially specified dual LP, a dual separation algorithm finds a violated dual constraint, if one exists. The polynomial time solvability of the dual separation problem implies the polynomial time solvability of the primal LP. Similarly, the polynomial time approximability of the dual separation problem could
imply the polynomial time approximability of the primal LP, as in~\cite{carr2000randomized,jain2003packing,fleischer2011tight}. In any case, note that solving the dual separation problem (whether we can do it in polynomial time or not) generates dual constraints, which is equivalent to generating columns for the primal LP. Column generation is a popular strategy for solving large LPs, and is used within branch-and-price frameworks for solving large IPs~\cite{barnhart1998branch}. In this section we explore this possibility.

Consider the dual linear program. 
\begin{equation}
\label{p: drtv}
\begin{array}{lr@{}ll}
\text{max}  & \displaystyle\sum_{r \in R} y_r - \displaystyle\sum_{v \in V} z_v & &\\
\text{s.t.}  & & & \\
    & \displaystyle\sum\limits_{r \in t} y_r - z_v &\leq c_{tv},      & \forall t \in T, v \in V \\
    &                                            y_{r} &\geq 0,         & \forall r \in R \\
    &                                            z_{v} &\geq 0,         & \forall v \in V
\end{array}
\end{equation} 
Given a polynomial time separation oracle for (\ref{p: drtv}), we can use the ellipsoid method to solve it in polynomial time. Then, since the ellipsoid method only considers polynomially many constraints in (\ref{p: drtv}), only polynomially many variables need to be written in the LP relaxation of (\ref{p: rtv}). Now, (\ref{p: drtv}) may be rewritten as follows.

\begin{equation}
\label{p: d2rtv}
\begin{array}{lr@{}ll}
\text{max}  & \displaystyle\sum_{r \in R} y_r - \displaystyle\sum_{v \in V} z_v & &\\
\text{s.t.}  & & & \\
    & (y, z_v)  &\in \mathcal{P}_v,      & \forall v \in V \\
    &                                            y_{r} &\geq 0,         & \forall r \in R \\
    &                                            z_{v} &\geq 0,         & \forall v \in V
\end{array}
\end{equation} 
Here, $\mathcal{P}_v$ is a polytope associated with vehicle $v \in V$ that is given by the constraints $\sum_{r \in t} y_r - z_v \leq c_{tv}$ for all $t \in T$ together with non-negativity constraints. Therefore, we may design a polynomial time separation oracle for (\ref{p: drtv}) by designing a polynomial time separation oracle for $\mathcal{P}_v$ and iterating over all $v \in V$. 

Consider any vehicle $v \in V$. We need a polynomial time subroutine that, given some $(y, z_v)$, either certifies that $(y, z_v) \in \mathcal{P}_v$ or returns a violated constraint. In other words, we need to verify that $\sum_{r \in t}y_r - c_{tv} \leq z_v$ for all $t \in T$. The core of our separation problem for $\mathcal{P}_v$ is then
\begin{align}
\label{eq: separation}
    \max_{t \in T} \left\{ \sum_{r \in t} y_r - c_{tv} \right\}.
\end{align}
This is because this quantity is bounded by $z_v$ if, and only if, $(y, z_v) \in \mathcal{P}_v$. We note that had we not made the simplifying assumption that any vehicle can serve any trip, the problem for vehicle $v \in V$ would need to optimize over $T(v) \subseteq T$. Presumably, $|T(v)| \ll |T|$ under tight quality of service constraints\footnote{Assuming $T(v)$ is polynomial-sized is the same as assuming $T$ is polynomial-sized, and so we have already considered this in Section~\ref{sec: an assumption on T}. This is because there are only polynomially many vehicles and $\bigcup_{v \in V} T(v) = T$ assuming our instance is feasible.}. In fact, this structure is precisely exploited in~\cite{alonso2017demand,riley2019column}.

In some select cases, it may be possible to efficiently solve (\ref{eq: separation}). For example, if we somehow knew $T$ forms a matroid (e.g., a $k$-uniform matroid where $k>0$ is a fixed vehicle seating capacity) and trip-vehicle assignment costs were additive with respect to request inclusion, we could use the greedy algorithm to optimally solve the problem~\cite{lawler2001combinatorial}. However, by and large, instances of practical interest are far less structured.

Suppose the routing problem defining the cost coefficients $c$ were the TSP. Then, we would identify (\ref{eq: separation}) as an instance\footnote{Strictly speaking, we require $T = 2^{R}$ so that maximizing over $t \in T$ is the same as maximizing over $t \subseteq R$. This corresponds to the setting with lax quality of service constraints, which is what may cause $T$ to be exponential-sized to begin with.} of the \textit{net-worth maximization} version of the prize-collecting TSP~\cite{johnson2000prize}, a problem that is NP-hard, where the profits are given by the dual variables $y$. Therefore, one may ask if (\ref{eq: separation}) can be approximated.

The closest related inapproximability result is that of Feigenbaum, Papadimitriou, and Shenker~\cite{feigenbaum2001sharing} for the \textit{directed} prize-collecting Steiner tree problem. They show, via a gap reduction from SAT, that it is NP-hard to approximate the problem within any constant factor. Intuitively, one would expect their result to carry over to the directed prize-collecting TSP, but it seems hard to modify the gadget in~\cite{feigenbaum2001sharing} for this purpose. The difficulty is that adding the edges necessary to allow tours may enable unanticipated interactions between variables and their negations, which the tree structure avoided. We believe this is an open problem.

The reader may notice that some literature (for example,~\cite{costa2006steiner,chapovska2006variations,nagarajan2008approximation,paul2020budgeted}) mention the result of Feigenbaum et al.~\cite{feigenbaum2001sharing} as an inapproximability result for the \textit{undirected} prize-collecting Steiner tree problem. Although not explicitly mentioned by Feigenbaum et al.~\cite{feigenbaum2001sharing}, their gadget can be easily updated to work on undirected graphs~\cite{paul2020}. The only change needed is to set the profit of clause nodes to $2K$ and the cost of edges between literals and clauses to $K$. The reader may also notice that~\cite{toth2014vehicle,feillet2005traveling} associate constant factor approximation results with the net-worth maximization version of the prize-collecting TSP. However, the references cited therein~\cite{bienstock1993note,goemans1995general} actually correspond to a different version of the prize-collecting TSP, namely the one in which a non-negative penalty is paid for each node excluded from the tour. The simple transformation between the two versions of the problem given in Section 10.2.2 of~\cite{toth2014vehicle}, while valid for optimal solutions, is not approximation preserving. See~\cite{angelelli2014complexity} for some tractable special cases and~\cite{johnson2000prize} for a catalogue of different versions of the prize-collecting Steiner tree problem. 

Since any generalization of the TSP includes the TSP as a special case, an inapproximability result for the directed (or undirected) prize-collecting TSP would also hold for any of the more realistic directed (or undirected) routing problems, outlined in the beginning of this section, that could determine the coefficients $c$. This would already avert the design of an approximate separation oracle\footnote{We say a separation oracle for $\mathcal{P}_v$ is $\alpha$-approximate if, given some $(y, z_v)$, it either certifies $(y, \frac{z_v}{\alpha}) \in \mathcal{P}_v$ or returns a violated constraint. An $\alpha$-approximation algorithm for (\ref{eq: separation}) would yield $\alpha$-approximate separation oracle for $\mathcal{P}_v$. To see this, let $z^*$ be the value obtained by an $\alpha$-approximation algorithm for (\ref{eq: separation}) and let $t^*$ be a trip $t \in T$ achieving $z^*$. If $z^* > z_v$, we know $t^*$ induces a violated constraint. Otherwise, for any $t \in T$ we have $\sum_{r \in t} y_r - c_{tv} \leq \frac{1}{\alpha}\left(\sum_{r \in t^*} y_r - c_{t^*v}\right) = \frac{z^*}{\alpha} \leq \frac{z_v}{\alpha}$ and so $\left(y, \frac{z_v}{\alpha}\right) \in \mathcal{P}_v$.} to obtain a constant factor approximation for the LP relaxation of (\ref{p: rtv}) in the style of~\cite{carr2000randomized,jain2003packing,fleischer2011tight}. Note that this would be in addition to the inherent difficulty with the mixed-sign objective of (\ref{p: drtv}).

In practice, problem (\ref{eq: separation}) appears as the pricing problem within branch-and-price frameworks for vehicle routing, possibly with additional service specific constraints. Pricing problems are typically solved via a variety of exact methods.
See~\cite{toth2014vehicle,feillet2005traveling} for surveys, where they refer it as the \textit{profitable tour problem} and~\cite{riley2019column} for a particularly relevant example. 
See also~\cite{ljubic2006algorithmic, leitner2018dual} for exact branch-and-cut and dual ascent-based branch-and-bound methods for the closely-related net-worth maximizing Steiner tree problem, respectively, and~\cite{ljubic2021solving} for a survey on recent developments.
We emphasize that, for this particular problem, these methods take exponential time in the worst case.

The unfavorable prospect outlined in this section motivates us to assume the candidate trip list $T$ is pre-computed and polynomial-sized whenever we formally consider the LP relaxation of (\ref{p: rtv}). 

\section{Randomized Rounding}
\label{sec: randomized rounding}

We now turn to our randomized rounding algorithm. Consider a feasible instance of the problem and the version of (\ref{p: rtv}) \textit{with} equality constraints. We assume we can solve its LP relaxation (e.g., assuming $|T| = \text{poly}\left(|R|, |V|\right)$ as a formality) to obtain a fractional solution $x$ of value $\text{LP}(\text{\ref{p: rtv}})$. We further assume the cost coefficients $c \geq 0$ are oracle-given. $\alpha$-approximate cost coefficients can be accounted for by Lemma~\ref{lemma: approximate c}. We consider two simple randomized rounding techniques, the second of which overcomes a deal-breaking shortcoming of the first.

In both cases, let $X_{tv} \in \{0, 1\}$ be a random variable indicating the assignment of vehicle $v \in V$ to trip $t \in T$. Let $X$ be the set of $X_{tv}$ random variables for all $(t,v) \in T \times V$. Let $X_v \subseteq X$ be the subset of $X$ involving vehicle $v \in V$ and $X_r \subseteq X$ be the subset of $X$ involving trips $t \in T$ such that $r \in t$.


\subsection{Independent Rounding.} 
\label{sec: independent rounding}

Consider setting each $X_{tv} \in X$ independently to $1$ with probability $0 \leq x_{tv} \leq 1$ and to $0$ otherwise. Clearly $\mathbb{E}\left[c(X)\right] = \sum_{v \in V} \sum_{t \in T} c_{tv} x_{tv} = \text{LP}(\text{\ref{p: rtv}})$. However, the rounded solution may be infeasible. 

One advantage of this technique is that we can easily upper bound the probability of a fixed request $r \in R$ being left unassigned. This is given by
\begin{align*}
    \prod_{(t,v) \in T(r) \times V} (1 - \Pr[X_{tv} = 1])
    &= \prod_{(t,v) \in T(r) \times V} (1 - x_{tv}) \\
    &\leq e^{-\sum_{(t,v) \in T(r) \times V}x_{tv}} \\
    &= e^{-1},
\end{align*}
where the last equality holds by the LP constraints. We will later see how we can handle the event in which a request is over-assigned. The bigger issue is that a vehicle $v \in V$ is over-assigned with non-zero probability (i.e., there are two or more trips $t \in T$ such that $X_{tv} = 1)$. This is problematic because, even if we can upper bound the probability of this event, there is always the possibility that we over-commit our fleet. We could in principle re-sample all random 
variables, potentially many times, until we reach an assignment with no over-commitments. Alternatively, we can bypass this issue altogether through a dependent rounding technique.

\subsection{Dependent Rounding.}
\label{sec: dependent rounding}

Recall each vehicle $v \in V$ satisfies $\sum_{t \in T} x_{tv} = 1$ by the LP constraints. We interpret this as a vehicle-specific probability distribution over the trips. Therefore, independently for each vehicle $v \in V$, assign it to a randomly chosen trip, where the probability of assigning it to trip $t \in T$ is given by $0 \leq x_{tv} \leq 1$. Let $X$ be the set of random variables corresponding to this step and observe that they are no longer independent. 

Note that a request $r \in R$ may appear in multiple trip-vehicle assignments. Allow $r$ to pick one arbitrarily. Here we are crucially using the fact that $T$ is downward closed. Further, by the monotonicity of the cost coefficients $c \geq 0$ with respect to request inclusion, doing this cannot increase the cost of our solution. Define $X'$ analogously to $X$, except this time it corresponds to the final output after the multiplicity correction step. The following result is immediate.

\begin{lemma}
\label{lemma: bound objective}
The expected cost of the final RTV assignment $X'$ is at most $\text{LP}(\text{\ref{p: rtv}})$.
\end{lemma}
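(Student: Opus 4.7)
The plan is to show that (i) the expected cost of the intermediate assignment $X$ produced before the multiplicity correction is exactly $\text{LP}(\ref{p: rtv})$, and (ii) the multiplicity correction never increases the cost pointwise, so the inequality transfers to $X'$ by linearity of expectation.

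For step (i), I would exploit the fact that the sampling step is defined per vehicle. Because $\sum_{t \in T} x_{tv} = 1$ for each $v \in V$ (using the equality version of the LP that we have relaxed), the values $\{x_{tv}\}_{t \in T}$ form a probability distribution over trips for vehicle $v$, and $\Pr[X_{tv} = 1] = x_{tv}$. Since exactly one $X_{tv}$ is nonzero per vehicle, the total cost of $X$ can be written as $\sum_{v \in V} \sum_{t \in T} c_{tv} \mathbf{1}[X_{tv} = 1]$, so linearity of expectation yields
\[
  \mathbb{E}[c(X)] \;=\; \sum_{v \in V}\sum_{t \in T} c_{tv}\, x_{tv} \;=\; \text{LP}(\ref{p: rtv}).
\]

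For step (ii), I would argue as follows. Fix any realization of $X$. For each vehicle $v$, let $t_v \in T$ denote the trip it was sampled to serve. In the multiplicity correction, every request $r$ that is covered more than once picks exactly one of the vehicles covering it; the other vehicles then lose $r$ from their trip. Thus for each $v$, the trip it ends up serving in $X'$ is some $t'_v \subseteq t_v$. The two key ingredients now come in: downward closure of $T$ guarantees $t'_v \in T$ (so the cost $c_{t'_v v}$ is well-defined), and monotonicity of $c$ with respect to request inclusion guarantees $c_{t'_v v} \le c_{t_v v}$. Summing over vehicles gives $c(X') \le c(X)$ deterministically.

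Combining the two steps, taking expectations preserves the inequality, and we obtain $\mathbb{E}[c(X')] \le \mathbb{E}[c(X)] = \text{LP}(\ref{p: rtv})$, which is exactly the statement of the lemma. There is no real obstacle here; the argument is essentially bookkeeping once one observes that each vehicle sees a proper distribution over $T$ and that both structural assumptions on $T$ and $c$ (downward closure and monotonicity) are precisely what is needed to guarantee the multiplicity correction is cost-nonincreasing.
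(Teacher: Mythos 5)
Your proof is correct and follows exactly the argument the paper relies on: the paper declares the lemma ``immediate'' after noting in the preceding paragraph that $\mathbb{E}[c(X)]$ equals the LP value and that the multiplicity correction cannot increase cost thanks to downward closure of $T$ and monotonicity of $c$. You have simply written out the same two steps in full detail.
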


Clearly, our procedure ensures each vehicle is assigned to exactly one trip (resolving our previous issue). However, it may still leave some unassigned requests. Nevertheless, we can still bound the probability of this event for any fixed request $r \in R$.

\begin{lemma}
\label{lemma: bound unassigned}
For any request $r \in R$, the probability of it being left unassigned in the final RTV assignment $X'$ is at most $1/e$.
\end{lemma}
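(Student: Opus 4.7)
The plan is to observe that a request $r$ is unassigned in $X'$ if and only if no vehicle is assigned (in $X$) to a trip containing $r$. This is because the multiplicity correction step only removes duplicate coverage; it never un-covers a request. Hence the probability of interest equals $\Pr[\text{no } v\in V \text{ is assigned to any } t\in T(r)]$.

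Next I would exploit the fact that the vehicle assignments in $X$ are mutually independent across vehicles, since the rounding step samples each vehicle's trip independently according to the vehicle-specific distribution $\{x_{tv}\}_{t\in T}$. For a fixed vehicle $v\in V$, the probability that its sampled trip lies in $T(r)$ is exactly $\sum_{t \in T(r)} x_{tv}$, so the probability it does not lie in $T(r)$ is $1 - \sum_{t \in T(r)} x_{tv}$. Writing $p_v := \sum_{t \in T(r)} x_{tv}$, independence across vehicles yields
\[
\Pr[r \text{ unassigned in } X'] \;=\; \prod_{v \in V} (1 - p_v).
\]

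To bound this, I would apply the standard inequality $1 - p_v \leq e^{-p_v}$ for $p_v \in [0,1]$, yielding
\[
\prod_{v \in V}(1 - p_v) \;\leq\; \exp\!\left(-\sum_{v \in V} p_v\right) \;=\; \exp\!\left(-\sum_{(t,v)\in T(r)\times V} x_{tv}\right).
\]
Then I would invoke the LP covering constraint for request $r$, namely $\sum_{(t,v)\in T(r)\times V} x_{tv} \geq 1$, to conclude the bound is at most $e^{-1} = 1/e$, as required.

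There is no real obstacle: the argument is a textbook independent-products calculation once one notices that the multiplicity correction step is harmless for the purpose of coverage. The only subtlety worth flagging is the dependence structure — the $X_{tv}$ variables are not independent within a vehicle, but they are independent across vehicles, which is exactly what the product form requires.
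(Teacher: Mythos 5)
Your proof is correct and follows essentially the same route as the paper's: reduce to the event that no vehicle's sampled trip contains $r$, use independence across vehicles, apply $1-p \le e^{-p}$, and invoke the LP constraint for $r$. The only cosmetic difference is that you cite the covering form $\sum_{(t,v)\in T(r)\times V} x_{tv} \ge 1$ where the paper works with equality, which changes nothing in the bound.
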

\begin{proof}
Since $r$ is assigned in $X$ if and only if it is assigned in $X'$, we can focus on the former. Let $Y_v$ be a binary random variable indicating whether vehicle $v \in V$ is assigned, in $X$, to a trip $t \in T$ such that $r \in t$. Then, $\Pr[Y_v = 1] = \sum_{t \in T(r)} x_{tv}$. 

Note that $r$ is left unassigned in $X$ if and only if $Y_v = 0$ for all $v \in V$. Now, since the rounding is done independently at the vehicle level, the variables $Y_v$ for $v \in V$ are independent, and so $r$ is left unassigned with probability
\begin{align*}
    \prod_{v \in V} (1 - \Pr[Y_v = 1]) 
    &\leq e^{-\sum_{v \in V} \Pr[Y_v = 1]} \\
    &= e^{-\sum_{(t, v) \in T(r) \times V} x_{tv}} 
    = e^{-1},
\end{align*}
where, as before, the last equality holds by the LP constraints.
\end{proof}
\begin{corollary}
\label{cor: cor}
The expected number of requests left unassigned in the final RTV assignment is at most $\frac{|R|}{e}$ (i.e., less than $36.8\%$ of all requests).
\end{corollary}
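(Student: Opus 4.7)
My plan is to obtain the corollary as an immediate consequence of Lemma~\ref{lemma: bound unassigned} via linearity of expectation, without any additional probabilistic machinery. The point is that Lemma~\ref{lemma: bound unassigned} already does all of the heavy lifting at the single-request level: it controls the marginal probability that a fixed $r \in R$ is unassigned in the final assignment $X'$ by $1/e$, and the claim here is only a statement about expected aggregate behavior.

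Concretely, for each $r \in R$ I would introduce an indicator random variable $Z_r \in \{0,1\}$ that equals $1$ if request $r$ is left unassigned in $X'$ and $0$ otherwise. The total number of unassigned requests is then $Z = \sum_{r \in R} Z_r$. Applying linearity of expectation gives $\mathbb{E}[Z] = \sum_{r \in R} \mathbb{E}[Z_r] = \sum_{r \in R} \Pr[Z_r = 1]$, and then Lemma~\ref{lemma: bound unassigned} bounds each summand by $1/e$, yielding $\mathbb{E}[Z] \leq |R|/e$. A short numerical remark that $1/e < 0.368$ delivers the parenthetical "less than $36.8\%$" claim.

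I expect no real obstacle here; in particular, linearity of expectation does not require independence among the $Z_r$, which is fortunate because the dependent rounding scheme in Section~\ref{sec: dependent rounding} certainly induces correlations between requests that share a trip or compete for the same vehicle. The only minor subtlety worth flagging in the write-up is that the event "$r$ is unassigned in $X'$" is the same as the event "$r$ is unassigned in $X$" (since the multiplicity correction step only prunes multiply-covered requests and never removes coverage), so the bound from Lemma~\ref{lemma: bound unassigned} applies verbatim to $Z_r$.
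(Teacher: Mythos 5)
Your proposal is correct and is exactly the paper's argument: the paper's entire proof of this corollary reads ``By linearity of expectation,'' which is precisely the indicator-variable decomposition you spell out, with Lemma~\ref{lemma: bound unassigned} bounding each marginal by $1/e$. Your remark that linearity of expectation needs no independence, and that the events for $X$ and $X'$ coincide, are both correct and consistent with the paper.
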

\begin{proof}
By linearity of expectation.
\end{proof}

Lemma~\ref{lemma: bound objective} and Corollary~\ref{cor: cor}, together with the fact that $\text{LP}(\text{\ref{p: rtv}}) \leq \text{OPT}$, imply Theorem~\ref{thm: main}. 

For any fixed request $r \in R$ we also show that, in the multiplicity correction step, it is unlikely to have too many trips/vehicles to choose from. That is, we bound the probability of $\sum_{(t,v) \in X_r} X_{tv}$ deviating above its unit mean by $\delta \in \mathbb{Z}_{\geq 1}$-many trips. The proof uses a multiplicative Chernoff bound (see for example~\cite{motwani1995randomized}), and it can be found in the Appendix.
\begin{lemma}
\label{lemma: bound multiplicity correction}
For any request $r \in R$, the probability of it being over-assigned by $\delta \in \mathbb{Z}_{\geq 1}$-many trips in the preliminary RTV assignment $X$ is at most $\frac{e^{\delta}}{(1+\delta)^{1 + \delta}}$.
\end{lemma}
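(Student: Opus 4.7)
The plan is to model the over-assignment count as a sum of independent Bernoulli random variables, verify that its expectation is exactly one, and then apply the standard multiplicative Chernoff upper tail bound.

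First I would reuse the setup from the proof of Lemma~\ref{lemma: bound unassigned}. For each vehicle $v \in V$ let $Y_v$ denote the indicator that, in the preliminary assignment $X$, vehicle $v$ is assigned to some trip containing $r$, so that $\Pr[Y_v = 1] = \sum_{t \in T(r)} x_{tv}$. Because the rounding is performed independently across vehicles, the variables $\{Y_v\}_{v \in V}$ are mutually independent Bernoulli random variables. The number of trip-vehicle assignments covering $r$ in $X$ equals $W := \sum_{v \in V} Y_v = \sum_{(t,v) \in X_r} X_{tv}$, and $r$ is over-assigned by $\delta$-many trips precisely when $W \geq 1 + \delta$.

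Next I would compute $\mu := \mathbb{E}[W]$. By linearity of expectation,
\[
\mu = \sum_{v \in V} \sum_{t \in T(r)} x_{tv} = \sum_{(t,v) \in T(r) \times V} x_{tv} = 1,
\]
where the last equality is exactly the covering constraint (written as an equality, as in Section~\ref{sec: randomized rounding}) for request $r$ in $\text{LP}(\text{\ref{p: rtv}})$. The event of interest is therefore $\{W \geq (1 + \delta)\mu\}$, i.e., a multiplicative deviation of $\delta$ above the mean.

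Finally I would invoke the multiplicative Chernoff upper tail inequality for independent $\{0,1\}$-valued random variables: for any $\delta > 0$,
\[
\Pr[W \geq (1+\delta)\mu] \leq \left(\frac{e^{\delta}}{(1+\delta)^{1+\delta}}\right)^{\mu}.
\]
Substituting $\mu = 1$ yields the claimed bound $\Pr[W \geq 1 + \delta] \leq e^{\delta}/(1+\delta)^{1+\delta}$. There is no real obstacle here beyond lining up the quantities with the hypotheses of Chernoff; the only subtle point is ensuring the equality-form LP constraint is used so that $\mu$ is exactly $1$ (not merely at most $1$), which is what makes the bound clean and matches the statement.
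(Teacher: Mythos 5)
Your proposal is correct and follows essentially the same route as the paper's own proof in the Appendix: define the independent vehicle-level indicators $Y_v$, observe that their sum equals $\sum_{X_{tv} \in X_r} X_{tv}$ with expectation exactly $1$ by the equality-form covering constraint, and apply the multiplicative Chernoff bound with $\mu = 1$. No gaps.
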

We note that we could have similarly used a multiplicative Chernoff bound to produce an alternate proof of Lemma~\ref{lemma: bound unassigned}. Lastly, although we did not need this in our analysis, we note that the sets of random variables $X_v$ for $v \in V$, $X_r$ for $r \in R$, and even $X$ itself, are each negatively associated. We show this in the Appendix for completeness.

We end this section by introducing a class of instances showing that \textit{i}) the integrality gap of the LP relaxation of (\ref{p: rtv}) is at least $2$, and \textit{ii}) the upper bound of $1/e$ on the rejection rate is tight for our algorithm. 

We are given two vehicles of capacity $k > 0$ together with $k+1$ requests. Any trip of size $\leq k$ can be assigned to either vehicle at unit cost (e.g., all requests go from the same origin to the same destination). Figure~\ref{fig: toy example} depicts the shareability graph of our instance when $k=2$. In the $k$th instance, an optimal integer solution has cost $2$. Meanwhile, an optimal fractional solution assigns fractional value $1/\binom{k+1-1}{k-1}$ to each of the $\binom{k+1}{k}$ distinct $k$-passenger trips, each at unit cost. The integrality gap of the $k$th instance is then
\begin{align*}
    \frac{2}{\binom{k+1}{k} \cdot \frac{1+(k-1) \epsilon}{\binom{k+1-1}{k-1}}} 
    = 2 \cdot \frac{k}{k+
    1}.
\end{align*}
Since $\lim_{k \rightarrow \infty} 2 \cdot \frac{k}{k+1} = 2$, the integrality gap of the LP relaxation of (\ref{p: rtv}) is at least $2$. The intuitive reason for this phenomenon is that there is no real distinction between the vehicles. Then, although these instances are trivial in a practical sense, the generality of (\ref{p: rtv}) (which is meant to model distinct vehicles) causes it to miss the obvious structure. 

\begin{figure}[ht]
    \centering
    \includegraphics[width=0.75\columnwidth]{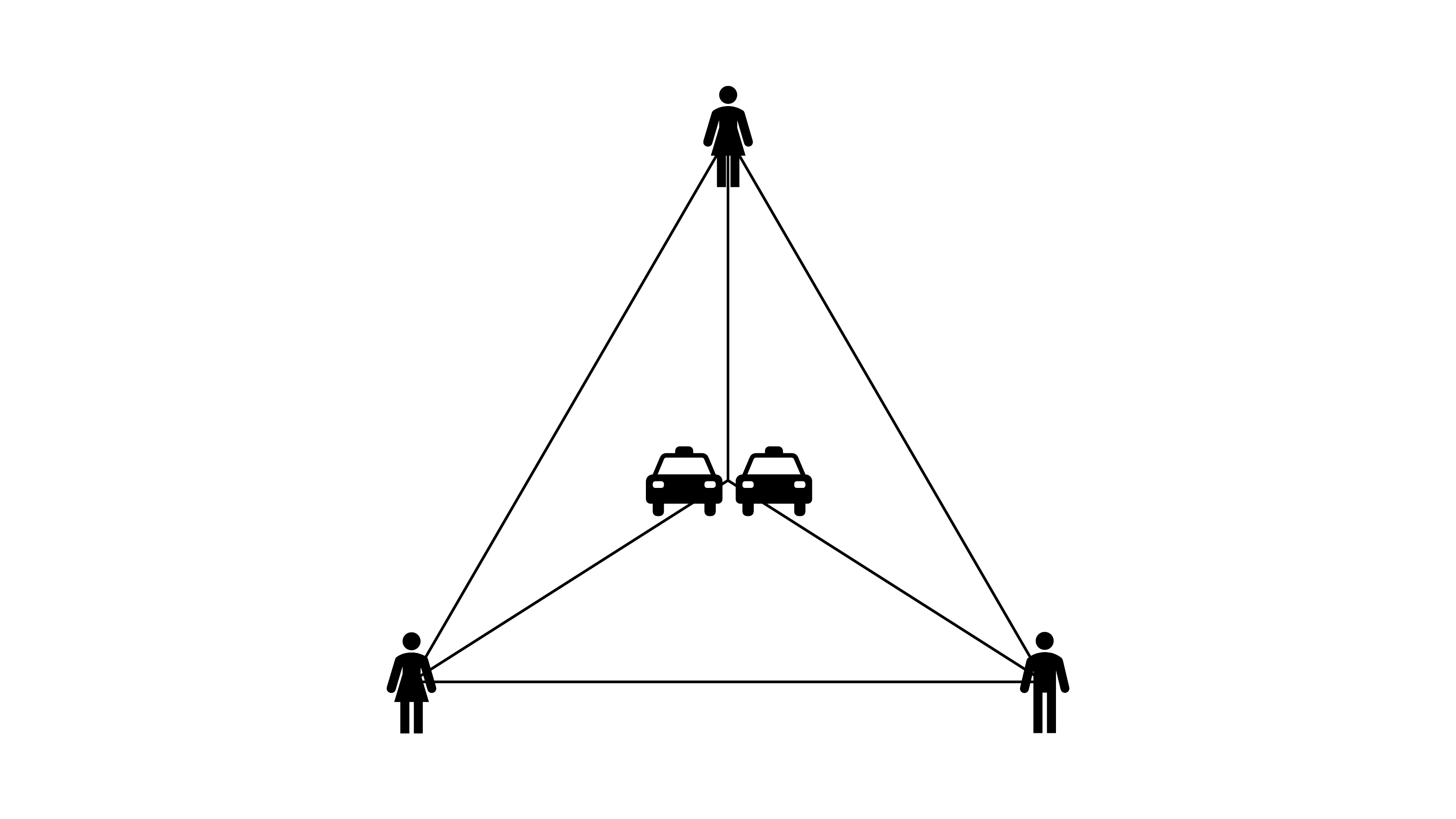}
    \vspace*{-3mm}
    \caption{In this instance, any trip of size 2 or smaller can be served by either vehicle at unit cost. An optimal integer solution has cost $2$, whereas an optimal fractional solution will be half-integral and have cost $3/2$. The integrality gap of this instance is $4/3$.}
    \label{fig: toy example}
\end{figure}

Now we slightly modify our instance class to show the upper bound of $1/e$ on the rejection rate is tight for our algorithm. In the $k$th instance, rather than having $2$ vehicles, we instead have $\binom{k+1}{k} = k+1$ vehicles. Assignment costs do not change, and so the costs of optimal integer or fractional solutions do not change either. However, for fractional solutions, we may reach an (admittedly pathological) optimal solution in which each  vehicle is assigned to exactly one of the $k+1$ distinct $k$-passenger trips with fractional value $1/k$ (and to the empty trip with fractional value $\frac{k-1}{k}$ at zero cost). Then, any fixed request $r \in R$ has uniform fractional support on $k$ distinct vehicles, and so the probability of it being left unassigned by our rounding algorithm is exactly $\left(\frac{k-1}{k}\right)^k$. Since $\lim_{k \rightarrow \infty}\left(\frac{k-1}{k}\right)^k = \frac{1}{e}$, Lemma~\ref{lemma: bound unassigned} is tight in the limit\footnote{We note that if a request $r \in R$ has not necessarily uniform fractional support on $k$ distinct vehicles, we could have used this limit argument together with the inequality of arithmetic and geometric means to obtain yet another proof of Lemma~\ref{lemma: bound unassigned}.}.

\section{Penalty Version and Multiple Rounds}
\label{sec: penalty version and multiple rounds}

The main shortcoming of our algorithm is that it may leave some unassigned requests. Typically, a remedy to this is to repeat the procedure to cover any missed assignments (at the expense of augmenting the expected objective value). However, vehicles cannot be assigned to more than one trip and so we cannot do this in general. Nevertheless, in practice and by design, any unassigned requests are carried over to the next round of batch assignments (e.g., 5 to 30 seconds later). In this section, we study the effects of this design feature on the subsequent assignment of initially unassigned requests.

Pick any request $r \in R$. For now, consider the probability of $r$ being left unassigned after $n$ \textit{feasible} rounds. Let $A_i$ be the event that $r$ is unassigned after $i$ feasible rounds. Then,
\small{
\begin{align*}
    \Pr(A_n) 
    &= \Pr(A_n | A_1)\Pr(A_1) + \Pr(A_n | A_1^c) \Pr(A_1^c) \\
    &= \Pr(A_n | A_1) \Pr(A_1) \leq \frac{1}{e} \Pr(A_n | A_1),
\end{align*}
}
where the second equality holds since $\Pr(A_n | A_1^c) = 0$ and the inequality holds by Lemma~\ref{lemma: bound unassigned}. Likewise, we can write
\small{
\begin{align*}
     \Pr(A_n | A_1) 
     &= \Pr(A_n | A_2, A_1)\Pr(A_2 | A_1) \\
     & \qquad + \Pr(A_n | A_2^c, A_1) \Pr(A_2^c | A_1) \\
     &= \Pr(A_n | A_2, A_1)\Pr(A_2 | A_1) \\
     &\leq \frac{1}{e}\Pr(A_n | A_2, A_1),
\end{align*}
}
where the second equality holds since $\Pr(A_n | A_2^c, A_1) = 0$ and the inequality holds since, given that $r$ is unassigned in the first feasible round, in the second feasible round we attempt to assign it but fail to do so with probability at most $1/e$ by Lemma~\ref{lemma: bound unassigned}. We can extend this argument to $n$ feasible rounds, where $\Pr(A_n | A_n, \cdots, A_1) = 1$, to obtain the following.
\begin{corollary}
\label{cor: high probability}
For any request $r \in R$, the probability of it being left unassigned after $n$ feasible rounds of batch assignments is at most $(1/e)^n$.
\end{corollary}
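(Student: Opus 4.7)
The plan is to formalize the conditional-probability telescoping already sketched in the preceding paragraphs as a clean induction on $n$, using Lemma~\ref{lemma: bound unassigned} both as the base case and as the per-step inequality. Let $A_i$ denote the event that request $r$ is still unassigned after $i$ feasible rounds of batch assignments. The base case $n=1$ is exactly Lemma~\ref{lemma: bound unassigned}, which gives $\Pr(A_1) \le 1/e$.

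For the inductive step, assume $\Pr(A_{n-1}) \le (1/e)^{n-1}$. By the law of total probability,
\begin{equation*}
\Pr(A_n) = \Pr(A_n \mid A_{n-1})\Pr(A_{n-1}) + \Pr(A_n \mid A_{n-1}^c)\Pr(A_{n-1}^c).
\end{equation*}
The second term vanishes, because once $r$ has been matched to a vehicle it is removed from the request pool and cannot reappear as unassigned in a later round. It remains to bound the conditional probability $\Pr(A_n \mid A_{n-1})$. Conditional on $r$ still being open at the start of round $n$, it participates in that round's LP like any other request, so Lemma~\ref{lemma: bound unassigned}, applied to the (assumed feasible) round-$n$ LP, gives $\Pr(A_n \mid A_{n-1}) \le 1/e$. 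Combining this with the inductive hypothesis yields $\Pr(A_n) \le (1/e)^n$, closing the induction.

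The main obstacle is the subtle point of justifying that Lemma~\ref{lemma: bound unassigned} can be invoked independently round by round. This requires two facts: (i) the randomness used for dependent rounding in round $n$ is drawn fresh, independently of the history, which is immediate from the algorithm's definition; and (ii) the LP identity $\sum_{(t,v) \in T(r) \times V} x_{tv} = 1$ that drives the proof of Lemma~\ref{lemma: bound unassigned} continues to hold for $r$ in round $n$, which is exactly the feasibility assumption baked into the statement of the corollary. With these two observations, the conditional bound $\Pr(A_n \mid A_{n-1}) \le 1/e$ reduces verbatim to the single-round argument, and no further probabilistic machinery is needed.
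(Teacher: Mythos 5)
Your proof is correct and is essentially the paper's argument: the paper telescopes $\Pr(A_n) \leq \frac{1}{e}\Pr(A_n \mid A_1) \leq \frac{1}{e^2}\Pr(A_n \mid A_2, A_1) \leq \cdots$ forward, which is the same computation as your induction on $\Pr(A_{n-1})$ since, assignment being absorbing, $A_{n-1} = A_1 \cap \cdots \cap A_{n-1}$. Both rest on the same two facts you identify --- the vanishing complementary term and the per-round application of Lemma~\ref{lemma: bound unassigned} to the fresh, feasible round-$n$ LP.
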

As in~\cite{alonso2017demand}, we are implicitly accounting for requests that were already picked up by ensuring each term $c_{tv}$ includes the cost of routing passengers inside the vehicle, if any, until the round in which they are dropped off. 

Admittedly, there does not always exist a feasible solution to the RTV assignment problem. For example, this may occur if vehicle supply is low relative to travel demand. To formally handle this, one can introduce a dummy vehicle $v_r$ for each request $r \in R$. This vehicle can only be assigned to trip $\{r\} \in T$, in which case we incur a large cost $\kappa_r \geq 0$ representing the penalty for ignoring $r$, or the empty trip $\emptyset \in T$ at zero cost. This is in fact the strategy followed in practice.

By doing this, we actually ensure we meet the feasibility assumption required by Theorem~\ref{thm: main}. We moreover preserve the monotonicity assumption required on the cost coefficients. Therefore, we can again use our randomized algorithm, although our performance guarantee is now with respect to the modified objective function which now includes the penalty terms. Still, a request $r \in R$ is unassigned (in terms of ILP feasibility) with probability at most $1/e$. In this case, we can formally cover $r$ by assigning it to its dummy vehicle $v_r$ and paying the corresponding penalty $\kappa_r$. This produces a feasible ILP solution yielding the following corollary.
\begin{corollary}
Suppose $T$ is polynomial-sized. If trip-vehicle costs are oracle-given and monotonic increasing w.r.t. request inclusion, and there is a penalty $\kappa_r \geq 0$ for ignoring request $r \in R$, there is a polynomial time randomized algorithm
yielding a solution of expected cost at most $\text{OPT} + \frac{1}{e} \sum_{r \in R} \kappa_r$, where $\text{OPT}$ is the optimal feasible sum of trip-vehicle assignment costs and penalties. If trip-vehicle assignment costs are $\alpha$-approximate, we pay an additional factor of $\alpha$ in the first term of the expected cost.
\end{corollary}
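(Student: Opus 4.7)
The plan is to read off the corollary from the construction already introduced in the paragraph just above it, plus a short expected-cost calculation. First, I would make the modification explicit: augment the instance with a dummy vehicle $v_r$ for every $r \in R$ whose admissible trips are exactly $\{r\}$ (cost $\kappa_r$) and $\emptyset$ (cost $0$). Then the modified instance is trivially feasible (set every real vehicle to $\emptyset$ and every dummy vehicle to the singleton $\{r\}$), and trip-vehicle costs remain monotonic increasing with respect to request inclusion, since the only nontrivial cost comparisons introduced are $0 \leq \kappa_r$. Moreover, $|T|$ grows by at most $|R|$, so polynomiality of the trip list is preserved, and the cost coefficients for the newly added dummy variables are given exactly rather than via an approximation oracle.

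Next I would apply Theorem~\ref{thm: main} to this modified instance to produce a random assignment $X'$. Since the optimum of the modified ILP equals $\text{OPT}$ (the optimal sum of trip-vehicle assignment costs and penalties in the penalty version), Lemma~\ref{lemma: bound objective} gives $\mathbb{E}[c(X')] \leq \text{LP}(\text{\ref{p: rtv}}) \leq \text{OPT}$, with an extra factor of $\alpha$ by Lemma~\ref{lemma: approximate c} if routing costs are only $\alpha$-approximate (note that dummy-vehicle coefficients are exact, so $c' = c$ on those entries and the $\alpha$ factor can really only attach to the routing contribution; for the statement as written, absorbing it into the first term of $\text{OPT}$ is sufficient). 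Although the modified instance is feasible, the randomized rounding may still leave some requests unassigned in $X'$, exactly as in the unmodified analysis.

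The final step is the deterministic post-processing indicated in the paragraph preceding the corollary: for each request $r \in R$ left unassigned by $X'$, override the choice made for its dummy vehicle $v_r$ by reassigning it from $\emptyset$ to $\{r\}$, paying $\kappa_r$. This is always legal because $v_r$ is a dedicated dummy vehicle that does not interact with any other request, so no other constraint is violated. The resulting solution is a feasible integer assignment, and its cost exceeds $c(X')$ by at most $\sum_{r \in R} \kappa_r \cdot \mathbf{1}[r \text{ unassigned in } X']$.

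Finally, I would bound the expected extra cost by linearity of expectation combined with Lemma~\ref{lemma: bound unassigned}: each term contributes at most $\kappa_r \cdot \Pr[r \text{ unassigned in } X'] \leq \kappa_r/e$, so the expected total overhead is at most $\tfrac{1}{e}\sum_{r\in R}\kappa_r$. Adding this to the earlier bound on $\mathbb{E}[c(X')]$ yields the stated expected cost of at most $\text{OPT} + \tfrac{1}{e}\sum_{r\in R}\kappa_r$ (respectively $\alpha\cdot\text{OPT} + \tfrac{1}{e}\sum_{r\in R}\kappa_r$ in the approximate-cost regime). There is no real obstacle here; the only thing to be careful about is confirming that the dummy-vehicle augmentation genuinely preserves both the feasibility hypothesis and the monotonicity hypothesis that Theorem~\ref{thm: main} requires, and that the post-processing step does not disturb any constraint of the ILP, which is immediate because each $v_r$ is isolated from the rest of the instance.
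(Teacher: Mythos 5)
Your proposal is correct and follows essentially the same route as the paper: augment the instance with per-request dummy vehicles to restore feasibility and monotonicity, apply the randomized rounding guarantee to the modified objective, and then cover each still-unassigned request via its dummy vehicle, bounding the expected extra penalty by $\kappa_r/e$ per request using Lemma~\ref{lemma: bound unassigned} and linearity of expectation. The only difference is that you spell out details (preservation of the hypotheses, the isolation of dummy vehicles, and where the $\alpha$ factor attaches) that the paper leaves implicit in the surrounding prose.
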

In practice, if there is enough vehicle supply to clear demand on every round  (intuitively, the average number of vehicle seats available is greater than or equal to the request arrival rate times the average time spent in the system~\cite{little1961proof}), one can prioritize unassigned requests by augmenting their penalty terms in subsequent rounds of batch assignments\footnote{If there is not enough vehicle supply, unassigned request queue up and quality of service is unmet, leading to reneging customers.}.

We now interpret Corollary~\ref{cor: high probability} within the context of the penalty version of the problem.
\begin{remark}
Corollary~\ref{cor: high probability} is applicable to the penalty version of the problem in the following sense: For any request $r \in R$, with high probability over a number of rounds, the algorithm is not forced to ignore $r$, but rather assigns it to a vehicle or purposely ignores it, in accordance with the sequence of LP solutions.
\end{remark}
That is to say, if a request is ignored by our algorithm after a number of rounds, it is increasingly likely to be so because of the LP solutions and not because of a rounding error.

Lastly, we note than when the penalty terms eclipse the assignment costs (e.g., routing costs), the penalty version of the problem is essentially equivalent to a maximum coverage version of the problem, where the objective is to serve as many requests as possible given the available fleet. This maximum coverage problem is still subject to the quality of service constraints that are implicit in the RTV graph, and therefore can be posed as an instance of the cardinality version of the maximum coverage problem with group budget constraints, for which a $1/e$-approximation algorithm is known~\cite{chekuri2004maximum}, assuming $T$ is given explicitly. Our randomized algorithm matches this guarantee in the sense that the expected fraction of requests that could have been covered but were not is at most $1/e$. If assignment costs are non-zero, our method provides the additional benefit that the quality of the assignment (e.g., in terms of the routing costs of assigned requests) is also optimized as part of the LP.

\section{Computational Experiments}
\label{sec: computational experiments}

We now share our computational experience. To evaluate our algorithm, we first run a number of simulations of a day in the operations of a high-capacity ridesharing system. We use publicly available NYC Taxi and Limousine Commission (TLC) data~\cite{taxidata} and an implementation of the Alonso-Mora et al. framework~\cite{alonso2017demand}. In each simulation, we solve the penalty version of the distance-minimizing RTV assignment problem using a commercial ILP solver. Next, we gather all instances of the RTV assignment problem solved throughout our simulations and solve them once again using our techniques. Note that we are \textit{not} comparing simulation paths to evaluate system-level performance. Rather, we use simulation paths to produce sets of test instances. This way we can have a side-by-side comparison of our algorithm against the use of a commercial ILP solver (the same we use to solve our LP relaxations). Each simulation produces $1,440$ instances. 

We run a total of 10 simulations, one for each combination\footnote{For vehicle capacities $k \geq 4$, our simulations employ time-outs in the trip generation step.} of $500$ or $1000$ vehicles and vehicles of capacity $k = 2, 3, 4, 5, 6$. For simulations with $500$ vehicles, the mean number of requests per instance is around $370$ requests\footnote{Even with a fixed number of vehicles, different capacities $k$ yield different simulation paths (e.g., with different service/reneging rates), and so the instances are not identical.}. Likewise, for simulations with $1000$ vehicles, the mean number of requests per instance is around $560$ requests.

In addition, we evaluate a deterministic rounding heuristic inspired by our randomized rounding algorithm. The only difference between the two is that, in our heuristic, each vehicle $v \in V$ is deterministically assigned to the trip $t^* \in T$ with largest fractional value (i.e., vehicle $v \in V$ is assigned to trip $t^* = \argmax_{t \in T}\{x_{tv}\}$). Note that we may still need to execute the multiplicity correction step.

Table~\ref{tab: rejections} presents statistics on the percentage of rejected requests for each method. As expected, solving the ILP leads to the lowest rate of request rejection. Moreover, the rejection rate is monotonic decreasing in the number of vehicles and in the vehicle capacity $k$. Surprisingly, our rounding algorithms are worse than the ILP on this metric by no more than one percentage point, and often much less. For our randomized rounding algorithm, this is far better than the theoretical guarantee provided in Theorem~\ref{thm: main}. We explore this further toward the end of this section. We also note that our deterministic heuristic slightly outperforms our randomized algorithm on this metric. Since the rejection rates are so close, the assignment costs (in terms of vehicle kilometers traveled) is comparable for all three methods. Solving the ILP leads to a slightly higher number of vehicle kilometers traveled since slightly less requests are rejected. We summarize the corresponding statistics in Table~\ref{tab: assignment costs}, which can be found in the Appendix.

\begin{table}[ht]
\centering{
\caption{Request rejection.}
\vspace{0.2cm}
\label{tab: rejections}
\resizebox{0.85\columnwidth}{!}{%
\begin{tabular}{clrrr}
500 Veh.             & \multicolumn{1}{c}{} & \multicolumn{3}{c}{Requests Rejected {[}\%{]}}                                           \\ \cline{3-5} 
$k$                  & \multicolumn{1}{c}{} & \multicolumn{1}{c}{ILP} & \multicolumn{1}{c}{Rand. Rnd.} & \multicolumn{1}{c}{Det. Rnd.} \\ \hline
\multirow{2}{*}{2}   & Mean                 & 27.09                   & 27.37                          & 27.32                         \\
                     & Med                  & 31.50                   & 31.55                          & 31.56                         \\
\multirow{2}{*}{3}   & Mean                 & 20.31                   & 20.73                          & 20.65                         \\
                     & Med                  & 23.83                   & 23.98                          & 24.00                         \\
\multirow{2}{*}{4}   & Mean                 & 15.41                   & 15.84                          & 15.77                         \\
                     & Med                  & 17.82                   & 18.12                          & 18.12                         \\
\multirow{2}{*}{5}   & Mean                 & 12.69                   & 13.14                          & 13.07                         \\
                     & Med                  & 14.50                   & 14.88                          & 14.80                         \\
\multirow{2}{*}{6}   & Mean                 & 11.51                   & 11.93                          & 11.85                         \\
                     & Med                  & 12.91                   & 13.29                          & 13.20                         \\ \hline
\multicolumn{1}{l}{} &                      & \multicolumn{1}{l}{}    & \multicolumn{1}{l}{}           & \multicolumn{1}{l}{}          \\
1000 Veh.            & \multicolumn{1}{c}{} & \multicolumn{3}{c}{Requests Rejected {[}\%{]}}                                           \\ \cline{3-5} 
$k$                  & \multicolumn{1}{c}{} & \multicolumn{1}{c}{ILP} & \multicolumn{1}{c}{Rand. Rnd.} & \multicolumn{1}{c}{Det. Rnd.} \\ \hline
\multirow{2}{*}{2}   & Mean                 & 13.06                   & 13.58                          & 13.53                         \\
                     & Med                  & 13.33                   & 13.76                          & 13.69                         \\
\multirow{2}{*}{3}   & Mean                 & 8.84                    & 9.74                           & 9.57                          \\
                     & Med                  & 8.36                    & 9.24                           & 9.09                          \\
\multirow{2}{*}{4}   & Mean                 & 6.12                    & 7.11                           & 6.93                          \\
                     & Med                  & 5.39                    & 6.43                           & 6.24                          \\
\multirow{2}{*}{5}   & Mean                 & 4.69                    & 5.61                           & 5.46                          \\
                     & Med                  & 3.95                    & 4.88                           & 4.77                          \\
\multirow{2}{*}{6}   & Mean                 & 4.15                    & 5.02                           & 4.90                          \\
                     & Med                  & 3.44                    & 4.44                           & 4.31                          \\ \hline
\end{tabular}%
}}
\end{table}

Table~\ref{tab: computation time} summarizes the performance of our implementation of all three methods in terms of computation time. We observe that both rounding methods are faster than solving the ILP to optimality, with the percent improvement for mean and median values ranging between $2-7\%$. Importantly, we observe that the percent improvement for mean values is consistently higher than the percent improvement for median values. This shows the computation time distributions are skewed to the right, and so the percent improvement is higher for the worst case (i.e., slowest) instances. Arguably, these are the critical instances in real-time decision making, and so LP rounding methods provide a way to improve on the worst case computation time performance.

\begin{table}[ht]
\centering
\caption{Computation time.}
\vspace{0.2cm}
\label{tab: computation time}
\resizebox{0.9\columnwidth}{!}{%
\begin{tabular}{clrrrrr}
500 Veh. &
  \multicolumn{1}{c}{} &
  \multicolumn{1}{c}{ILP} &
  \multicolumn{2}{c}{Rand. Rnd.} &
  \multicolumn{2}{c}{Det. Rnd.} \\ \cline{3-7} 
$k$ &
  \multicolumn{1}{c}{} &
  \multicolumn{1}{c}{\begin{tabular}[c]{@{}c@{}}Time\\ {[}s{]}\end{tabular}} &
  \multicolumn{1}{c}{\begin{tabular}[c]{@{}c@{}}Time\\ {[}s{]}\end{tabular}} &
  \multicolumn{1}{c}{\% Diff} &
  \multicolumn{1}{c}{\begin{tabular}[c]{@{}c@{}}Time\\ {[}s{]}\end{tabular}} &
  \multicolumn{1}{c}{\% Diff} \\ \hline
\multirow{2}{*}{2} &
  Mean &
  0.152 &
  0.154 &
  0.97 &
  0.144 &
  -5.01 \\
 &
  Med &
  0.154 &
  0.158 &
  2.59 &
  0.150 &
  -2.91 \\
\multirow{2}{*}{3} &
  Mean &
  0.229 &
  0.222 &
  -2.99 &
  0.215 &
  -6.26 \\
 &
  Med &
  0.172 &
  0.175 &
  2.12 &
  0.168 &
  -2.03 \\
\multirow{2}{*}{4} &
  Mean &
  0.319 &
  0.303 &
  -4.77 &
  0.297 &
  -6.91 \\
 &
  Med &
  0.199 &
  0.197 &
  -1.19 &
  0.189 &
  -5.42 \\
\multirow{2}{*}{5} &
  Mean &
  0.324 &
  0.309 &
  -4.66 &
  0.302 &
  -6.86 \\
 &
  Med &
  0.215 &
  0.212 &
  -1.80 &
  0.203 &
  -5.69 \\
\multirow{2}{*}{6} &
  Mean &
  0.310 &
  0.297 &
  -4.26 &
  0.289 &
  -6.85 \\
 &
  Med &
  0.217 &
  0.211 &
  -2.90 &
  0.203 &
  -6.64 \\ \hline
\multicolumn{1}{l}{} &
   &
  \multicolumn{1}{l}{} &
  \multicolumn{1}{l}{} &
  \multicolumn{1}{l}{} &
  \multicolumn{1}{l}{} &
  \multicolumn{1}{l}{} \\
1000 Veh. &
  \multicolumn{1}{c}{} &
  \multicolumn{1}{c}{ILP} &
  \multicolumn{2}{c}{Rand. Rnd.} &
  \multicolumn{2}{c}{Det. Rnd.} \\ \cline{3-7} 
$k$ &
  \multicolumn{1}{c}{} &
  \multicolumn{1}{c}{\begin{tabular}[c]{@{}c@{}}Time\\ {[}s{]}\end{tabular}} &
  \multicolumn{1}{c}{\begin{tabular}[c]{@{}c@{}}Time\\ {[}s{]}\end{tabular}} &
  \multicolumn{1}{c}{\% Diff} &
  \multicolumn{1}{c}{\begin{tabular}[c]{@{}c@{}}Time\\ {[}s{]}\end{tabular}} &
  \multicolumn{1}{c}{\% Diff} \\ \hline
\multirow{2}{*}{2} &
  Mean &
  0.731 &
  0.710 &
  -2.82 &
  0.697 &
  -4.64 \\
 &
  Med &
  0.761 &
  0.756 &
  -0.63 &
  0.739 &
  -2.90 \\
\multirow{2}{*}{3} &
  Mean &
  1.259 &
  1.188 &
  -5.65 &
  1.178 &
  -6.47 \\
 &
  Med &
  0.996 &
  0.959 &
  -3.76 &
  0.951 &
  -4.50 \\
\multirow{2}{*}{4} &
  Mean &
  2.075 &
  1.942 &
  -6.39 &
  1.935 &
  -6.74 \\
 &
  Med &
  1.472 &
  1.411 &
  -4.14 &
  1.406 &
  -4.47 \\
\multirow{2}{*}{5} &
  Mean &
  2.212 &
  2.073 &
  -6.26 &
  2.067 &
  -6.54 \\
 &
  Med &
  1.708 &
  1.640 &
  -3.96 &
  1.625 &
  -4.88 \\
\multirow{2}{*}{6} &
  Mean &
  2.124 &
  1.994 &
  -6.11 &
  1.984 &
  -6.60 \\
 &
  Med &
  1.710 &
  1.639 &
  -4.15 &
  1.631 &
  -4.61 \\ \hline
\end{tabular}%
}
\end{table}




Lastly, we address the question of why the rejection rate of our rounding algorithms is so close to that achieved by the ILP. Figure~\ref{fig: full lp support} shows a histogram of the fractional values supported on the LP solutions across all $1,440$ instances of a simulation with $1000$ vehicles of capacity $k=4$. We see that the overwhelming majority ($\sim 96\%$) of non-zero fractional values are in fact integral. Of course, the rounding algorithms make no mistakes on these assignments. We extend this observation in Figure~\ref{fig: non integral lp support}, which shows a histogram of the fractional values strictly between $0$ and $1$. We observe that whenever an assignment is not integral, it is likely to be half-integral ($\sim 62\%$ of non-integral variables are half-integral). In half-integral cases, we can a posteriori tighten the probability that our randomized rounding algorithms ignores a fixed request $r \in R$ to be $\leq 1/4$.

\begin{figure}[!ht]
    \centering
    \includegraphics[width=0.75\columnwidth]{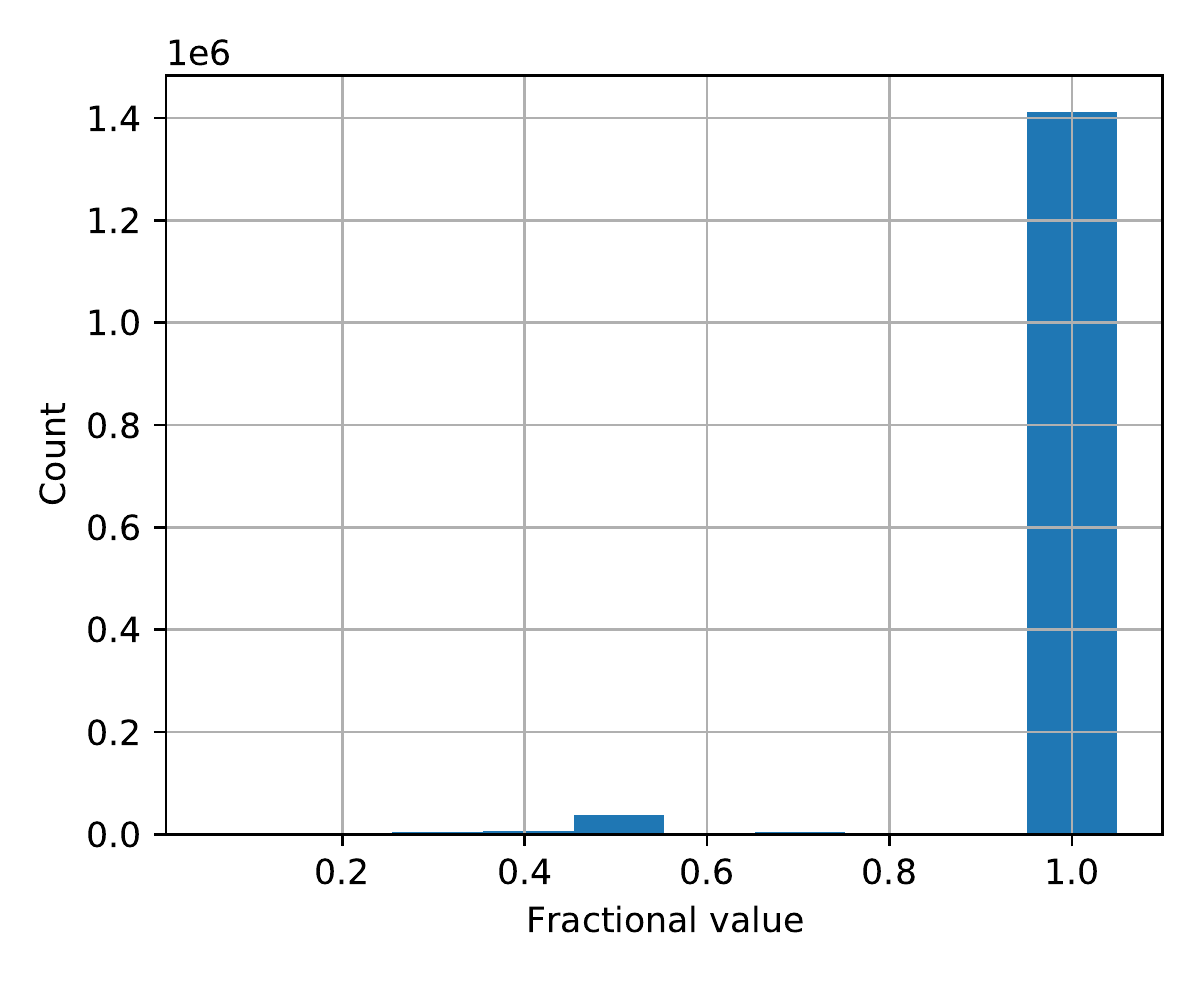}
    \vspace*{-0.7cm}
    \caption{Histogram of LP support.}
    \label{fig: full lp support}
\end{figure}
\begin{figure}[ht]
    \centering
    \includegraphics[width=0.75\columnwidth]{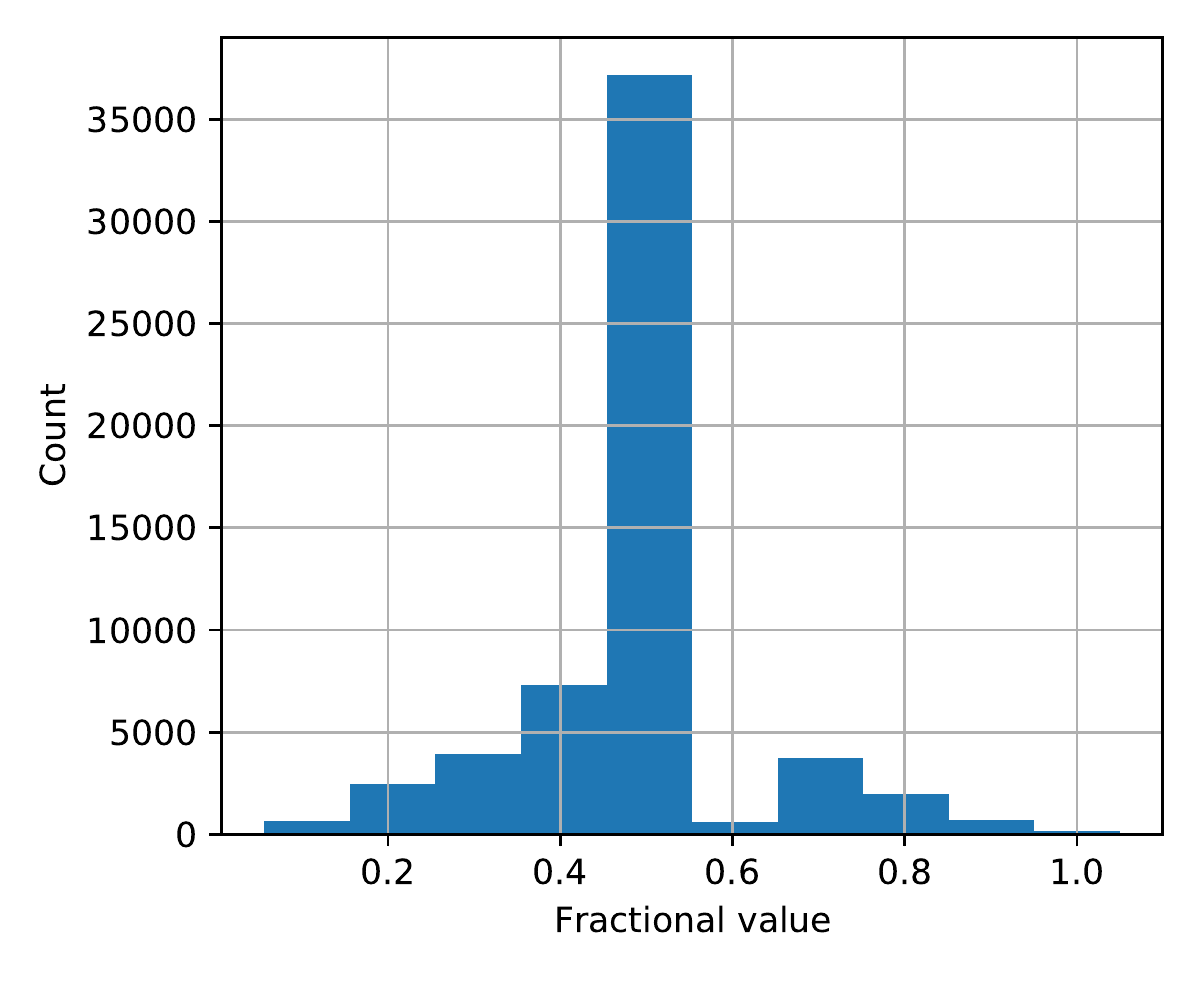}
    \vspace*{-0.7cm}
    \caption{Histogram of non-integral LP support.}
    \label{fig: non integral lp support}
\end{figure}

We have been unable to concretely characterize reasons why the LP relaxation of (\ref{p: rtv}) is so tight. We can produce toy examples as the one in Figure~\ref{fig: toy example} (and more generally the class of instances described in Section~\ref{sec: dependent rounding}). However, these examples are neither realistic nor exhaustive, and so we believe tackling this problem (e.g., in a data-driven way, with data sets besides the NYC TLC data~\cite{taxidata}) is an interesting research direction. In particular, we do not know whether the LP is \textit{always} tight in practice. If there are practical instances in which the LP is not as tight, they will likely be harder for an ILP solver, and so they might more strongly showcase the advantage of LP rounding.

\section*{Acknowledgments}
We thank Matthew Zalesak for sharing his implementation of the Alonso-Mora et al.~\cite{alonso2017demand} framework with us, and the anonymous reviewers for helpful feedback.

\bibliographystyle{siam}
\bibliography{bib}

\section*{Appendix}

\subsection*{Relation to Exact Cover and Set-Partitioning.}
\label{sec: relation} 
Given a pair $(X, \mathcal{S)}$ where $X$ is a ground set and $\mathcal{S}$ is a collection of subsets of the ground set, an \textit{exact cover} is a sub-collection of $\mathcal{S}$ that partitions $X$. The \textit{exact cover problem} asks whether $(X, \mathcal{S)}$ has an exact cover, and it is well-known to be NP-complete~\cite{karp1972reducibility}. Assuming the trip set $T$ is given explicitly, we can pose the feasibility version of the RTV assignment problem as an instance of the exact cover problem. To see this, let $X = R \cup V$ be the ground set and $\mathcal{S} = \{t \cup \{v\} : t \in T, v \in V(t)\}$ be the collection of subsets of the ground set. 

The weighted \textit{set-partitioning problem} (i.e., set cover with equality constraints)~\cite{garfinkel1969set,balas1976set} is an optimization problem closely related to the exact cover problem. Given non-negative cost $c_S \geq 0$ for each $S \in \mathcal{S}$, the problem asks for a minimum cost sub-collection of $\mathcal{S}$ that partitions $X$. An instance of the RTV assignment problem can be seen as a special instance of the weighted set partitioning problem. In abstract terms:
\begin{enumerate}
    \item There exists a subset $Y \subseteq X$ of the ground set such that the sub-collections $\mathcal{S}(y) = \{S \in \mathcal{S} : y \in S\}$ for $y \in Y$ partition $\mathcal{S}$. 
    \item For each $y \in Y$, the contraction $\mathcal{S}(y) / y = \{S \subseteq X - y: S + y \in \mathcal{S}(y)\}$ is downward closed (i.e., if $A \in \mathcal{S}(y) / y$ and $B \subseteq A$, then $B \in \mathcal{S}(y) / y$ ).
\end{enumerate}
Indeed, we satisfy these properties by again letting $X = R \cup V$ be the ground set, $\mathcal{S} = \{t \cup \{v\} : t \in T, v \in V(t)\}$ be the collection of subsets of the ground set, and $Y = V$. In our work we require trip-vehicle assignment costs to be monotonic increasing with respect to request inclusion. This is to say that the cost function $c: \mathcal{S} \rightarrow \mathbb{R}_{\geq 0}$ is monotone increasing with respect to inclusion in $X \setminus Y$. 

A lighthearted example in this class of problems is a wedding table planning problem where guests may only share a table if they mutually know each other. Then, $R$ is the set of guests and $V$ is the set of tables. While we state our results in the language of the RTV assignment problem, they more generally apply to this class of instances of set-partitioning. 

\subsection*{Proof of Lemma~\ref{lemma: bound multiplicity correction}}
\begin{proof}
Again, let $Y_v$ be a binary random variable indicating whether vehicle $v \in V$ is assigned, in $X$, to a trip $t \in T$ such that $r \in t$. Let $Y = \sum_{v \in V} Y_v$ (i.e., the number of different vehicles $r$ is assigned to in $X$). Since the rounding is done independently at the vehicle level, the variables $Y_v$ for $v \in V$ are independent, and so we can apply a multiplicative Chernoff bound. Namely, for any $\delta > 0$ we have
\begin{align*}
    \Pr[Y > (1 + \delta)E[Y]] \leq \left(\frac{e^{\delta}}{(1+\delta)^{1 + \delta}} \right)^{E[Y]}.
\end{align*}
We now show that $E[Y] = 1$. Since for each $v \in V$ we have $\sum_{X_{tv} \in X_v} X_{tv} \leq 1$ always, $Y$ in fact corresponds to the number of different trips $r$ is assigned to in $X$. That is, $Y = \sum_{X_{tv} \in X_r} X_{tv}$ and so 
\begin{align*}
    E\left[Y\right] 
    &= E\left[\sum_{X_{tv} \in X_r} X_{tv}\right] = \sum_{(t,v) \in T(r) \times V} E[X_{tv}] \\
    &= \sum_{(t,v) \in T(r) \times V} x_{tv} = 1,
\end{align*}
where, as before, the last equality holds by the LP constraints. Given this, we can interpret $\delta \in \mathbb{Z}_{\geq 1}$ as the number of trips $r$ is over-assigned by in $X$.
\end{proof}

\subsection*{Negative Association.}

A set of random variables $X_1, \cdots, X_n$ is said to be negatively associated if, for every two disjoint index sets $I, J \subseteq [n]$ and every two functions $f: \mathbb{R}^{|I|} \rightarrow \mathbb{R}$ and $g: \mathbb{R}^{|J|} \rightarrow \mathbb{R}$, both non-decreasing or both non-increasing, we have $\mathbb{E}[f(X_i, i \in I)g(X_j, j \in J)]  \leq \mathbb{E}[f(X_i, i \in I)] \mathbb{E}[g(X_j, j \in J)]$.

The significance of negative association is that, while negatively associated random variables may be dependent, they are so in a way that exhibits concentration of measure Although we did not need negative association to analyze our dependent randomized rounding algorithm, we show it for completeness in some relevant sets of random variables.




Note that for any vehicle $v \in V$, the random variables within $X_{v}$ are \textit{not} independent. However, the following lemma implies they are negatively associated.

\begin{lemma}[~\cite{dubhashi1996balls}]
Let $X_1, \cdots, X_n$ be zero-one random variables with $\sum_{i=1}^{n} X_i = 1$ always. Then, $X_1, \cdots, X_n$ are negatively associated.
\end{lemma}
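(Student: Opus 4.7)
The plan is to exploit the fact that disjointness of $I$ and $J$ combined with $\sum_{i=1}^n X_i = 1$ forces at most one of the two sub-vectors $X_I = (X_i : i \in I)$ and $X_J = (X_j : j \in J)$ to be nonzero. Concretely, let $A$ be the event $\sum_{i \in I} X_i = 1$ and $B$ the event $\sum_{j \in J} X_j = 1$. Since $I, J$ are disjoint and exactly one $X_k$ equals $1$ overall, $A$ and $B$ are disjoint, so on $A^c$ we have $X_I = \vec{0}$ and on $B^c$ we have $X_J = \vec{0}$.

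The key trick will be to center $f$ and $g$ at the zero vector. First I would treat the non-decreasing case. Define $\alpha = f(X_I) - f(\vec{0})$ and $\beta = g(X_J) - g(\vec{0})$. Since $X_I, X_J \geq \vec{0}$ coordinatewise and $f, g$ are non-decreasing, $\alpha \geq 0$ and $\beta \geq 0$ pointwise. Moreover, because either $X_I = \vec{0}$ or $X_J = \vec{0}$ on every outcome (as $A$ and $B$ cannot both hold), either $\alpha = 0$ or $\beta = 0$, so $\alpha \beta \equiv 0$. Consequently, using $\alpha, \beta \geq 0$,
\begin{equation*}
0 = \mathbb{E}[\alpha\beta] \leq \mathbb{E}[\alpha]\,\mathbb{E}[\beta].
\end{equation*}
Expanding $\mathbb{E}[\alpha\beta]$ and $\mathbb{E}[\alpha]\mathbb{E}[\beta]$ and cancelling the common terms $f(\vec{0})\mathbb{E}[g(X_J)]$, $g(\vec{0})\mathbb{E}[f(X_I)]$, and $f(\vec{0})g(\vec{0})$ yields exactly
\begin{equation*}
\mathbb{E}[f(X_I)\,g(X_J)] \leq \mathbb{E}[f(X_I)]\,\mathbb{E}[g(X_J)],
\end{equation*}
which is the defining inequality of negative association.

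For the non-increasing case, I would simply redefine $\alpha = f(\vec{0}) - f(X_I)$ and $\beta = g(\vec{0}) - g(X_J)$, which are again pointwise non-negative (now by the flipped monotonicity together with $X_I, X_J \geq \vec{0}$), and again satisfy $\alpha\beta \equiv 0$ for the same disjointness reason; the identical algebraic cancellation goes through.

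I do not anticipate a serious obstacle. The only subtlety is recognizing that standard FKG/correlation machinery is unnecessary here: the hard work is done for free by the fact that $X_I$ and $X_J$ have disjoint \emph{supports} as random vectors, not merely negatively correlated marginals. Once one centers at $\vec{0}$ to align the direction of monotonicity with the non-negativity of the sub-vectors, the inequality $\mathbb{E}[\alpha\beta] \leq \mathbb{E}[\alpha]\mathbb{E}[\beta]$ is automatic from $\alpha\beta = 0$ and $\alpha, \beta \geq 0$, avoiding any induction on $n$ or appeal to log-supermodularity.
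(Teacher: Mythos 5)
Your proof is correct and complete: the disjointness of $I$ and $J$ together with $\sum_i X_i = 1$ does force $f(X_I)-f(\vec{0})$ and $g(X_J)-g(\vec{0})$ to be pointwise non-negative with identically zero product, and the cancellation you describe reduces the negative-association inequality exactly to $\mathbb{E}[\alpha\beta] \le \mathbb{E}[\alpha]\mathbb{E}[\beta]$. The paper offers no proof of its own --- it imports the lemma verbatim from Dubhashi and Ranjan --- and your centering-at-$\vec{0}$ argument is precisely the standard proof of that ``zero-one lemma,'' so there is nothing to add.
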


Moreover, note that since each vehicle is treated independently, the sets $X_{v}$ for all $v \in V$ are independent from one another. Couple that with the following.
\begin{lemma}[Closure Properties~\cite{joag1983negative}]
\
\begin{enumerate}
    \item The union of independent sets of negatively associated random variables is negatively associated. 
    \item A subset of two or more negatively associated random variables is negatively associated.
\end{enumerate}
\end{lemma}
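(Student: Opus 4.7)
The plan is to verify each claim directly from the definition of negative association. Part 2 (closure under subsets) is essentially immediate: for any subset $S \subseteq [n]$ with $|S| \geq 2$, every disjoint pair $I, J \subseteq S$ is also a disjoint pair in $[n]$, so the defining inequality is inherited verbatim from the negative association of the full set. The real content is in Part 1.

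For Part 1 (closure under independent unions), let $X = (X_1, \ldots, X_n)$ and $Y = (Y_1, \ldots, Y_m)$ be independent with each set negatively associated. Fix disjoint $I, J \subseteq [n+m]$ and monotone functions $f, g$ of the appropriate (common) type. I would split $I = I_X \cup I_Y$ and $J = J_X \cup J_Y$ according to whether indices refer to $X$- or $Y$-coordinates, noting that $I_X, J_X$ are disjoint in $[n]$ and $I_Y, J_Y$ are disjoint in $[m]$.

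The core of the proof is a two-stage conditioning argument. First, conditioning on $X$ and using $X \perp Y$, the restricted functions $y \mapsto f(X_{I_X}, y_{I_Y})$ and $y \mapsto g(X_{J_X}, y_{J_Y})$ are monotone of the same type in disjoint $Y$-indices, so negative association of $Y$ yields $\mathbb{E}[fg \mid X] \leq \mathbb{E}[f \mid X] \, \mathbb{E}[g \mid X]$. Second, $\mathbb{E}[f \mid X]$ and $\mathbb{E}[g \mid X]$ are themselves monotone functions of $X_{I_X}$ and $X_{J_X}$ respectively (integrating a monotone function against an independent measure preserves monotonicity of the same type), and $I_X, J_X$ are disjoint in $[n]$, so negative association of $X$ yields $\mathbb{E}[\mathbb{E}[f \mid X] \, \mathbb{E}[g \mid X]] \leq \mathbb{E}[f] \, \mathbb{E}[g]$. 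Taking expectations in the first inequality and chaining gives $\mathbb{E}[fg] \leq \mathbb{E}[f] \, \mathbb{E}[g]$, as required.

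The main obstacle is the monotonicity-preservation step between the two applications of negative association: one must confirm that if $h(x, y)$ is non-decreasing in $x$ for each fixed $y$, then $x \mapsto \int h(x, y) \, d\mu_Y(y)$ is non-decreasing. This is a routine pointwise-integration argument, but it is essential, and it is also exactly why $f$ and $g$ must share a monotonicity type (so that the conditional expectations remain comparable in the same direction, enabling the second invocation of negative association). Once this is in hand, the extension from two independent sets to any finite independent union follows by induction on the number of sets.
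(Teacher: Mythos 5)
Your proof is correct. The paper does not actually prove this lemma---it is stated as a known result and attributed to Joag-Dev and Proschan~\cite{joag1983negative}---so there is no in-paper argument to compare against; your two-stage conditioning argument (condition on $X$, apply negative association of $Y$; observe that the conditional expectations are monotone of the same type in disjoint blocks of $X$-coordinates, apply negative association of $X$; chain and induct on the number of sets) is precisely the standard proof of the union property in that reference, and your handling of the subset property as an immediate consequence of the definition is also right. The one step you single out---that integrating a monotone function against an independent measure preserves monotonicity of the same type---is indeed the only point requiring verification, and it follows pointwise from monotonicity in $x$ for each fixed $y$.
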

We immediately see that, in our algorithm, $X$ is itself negatively associated. Moreover, for each $r \in R$, we see that $X_r \subseteq X$ is negatively associated.



\subsection*{Assignment Costs.} See Table~\ref{tab: assignment costs}.

\begin{table}[ht]
\centering
\caption{Assignment costs.}
\vspace{0.2cm}
\label{tab: assignment costs}
\resizebox{\columnwidth}{!}{%
\begin{tabular}{clrrr}
500 Veh.             & \multicolumn{1}{c}{} & \multicolumn{3}{c}{Distance Traveled {[}km{]}}                                           \\ \cline{3-5} 
$k$                  & \multicolumn{1}{c}{} & \multicolumn{1}{c}{ILP} & \multicolumn{1}{c}{Rand. Rnd.} & \multicolumn{1}{c}{Det. Rnd.} \\ \hline
\multirow{2}{*}{2}   & Mean                 & 395.61                  & 395.30                         & 395.33                        \\
                     & Med                  & 429.74                  & 429.57                         & 429.58                        \\
\multirow{2}{*}{3}   & Mean                 & 438.05                  & 437.61                         & 437.65                        \\
                     & Med                  & 488.23                  & 488.11                         & 488.05                        \\
\multirow{2}{*}{4}   & Mean                 & 451.09                  & 450.61                         & 450.64                        \\
                     & Med                  & 506.58                  & 506.03                         & 506.15                        \\
\multirow{2}{*}{5}   & Mean                 & 458.21                  & 457.71                         & 457.75                        \\
                     & Med                  & 514.84                  & 514.29                         & 514.46                        \\
\multirow{2}{*}{6}   & Mean                 & 459.75                  & 459.26                         & 459.32                        \\
                     & Med                  & 518.64                  & 517.75                         & 517.86                        \\ \hline
\multicolumn{1}{l}{} &                      & \multicolumn{1}{l}{}    & \multicolumn{1}{l}{}           & \multicolumn{1}{l}{}          \\
1000 Veh.            & \multicolumn{1}{c}{} & \multicolumn{3}{c}{Distance Traveled {[}km{]}}                                           \\ \cline{3-5} 
$k$                  & \multicolumn{1}{c}{} & \multicolumn{1}{c}{ILP} & \multicolumn{1}{c}{Rand. Rnd.} & \multicolumn{1}{c}{Det. Rnd.} \\ \hline
\multirow{2}{*}{2}   & Mean                 & 712.89                  & 711.84                         & 711.94                        \\
                     & Med                  & 843.35                  & 842.72                         & 842.62                        \\
\multirow{2}{*}{3}   & Mean                 & 754.78                  & 752.95                         & 753.17                        \\
                     & Med                  & 900.68                  & 898.66                         & 899.36                        \\
\multirow{2}{*}{4}   & Mean                 & 755.05                  & 753.04                         & 753.22                        \\
                     & Med                  & 901.22                  & 899.39                         & 899.78                        \\
\multirow{2}{*}{5}   & Mean                 & 750.11                  & 748.11                         & 748.29                        \\
                     & Med                  & 897.48                  & 895.57                         & 894.98                        \\
\multirow{2}{*}{6}   & Mean                 & 746.58                  & 744.65                         & 744.76                        \\
                     & Med                  & 894.76                  & 892.65                         & 893.01                        \\ \hline
\end{tabular}%
}
\end{table}

\end{document}